\def\ps@pprintTitle{%
 \let\@oddhead\@empty
 \let\@evenhead\@empty
 \def\@oddfoot{}%
 \let\@evenfoot\@oddfoot}
\theoremstyle{plain}
\newtheorem{theorem}{Theorem}
\newtheorem{corollary}[theorem]{Corollary}
\newtheorem{lemma}[theorem]{Lemma}
\theoremstyle{definition}
\newtheorem{definition}[theorem]{Definition}
\theoremstyle{remark}
\DeclareSymbolFont{bbold}{U}{bbold}{m}{n}
\DeclareSymbolFontAlphabet{\mathbbold}{bbold}
\numberwithin{equation}{section}
\DeclareMathOperator*{\col}{col}
\definecolor{forestgreen}{rgb}{0.33,0.61,0.34}
\begin{document}

\begin{frontmatter}

\title{Analysis of the susceptible-infected-susceptible epidemic dynamics in networks via the non-backtracking matrix}

\author[bristol,ubmath,ubcdse]{Naoki Masuda\corref{cor}}
\ead{naokimas@buffalo.edu}
\fntext[bristol]{Department of Engineering Mathematics, University of Bristol, Woodland Road, Clifton, Bristol BS8 1UB, UK}
\fntext[ubmath]{Department of Mathematics, University at Buffalo, State University of New York, Buffalo, NY 14260, USA }
\fntext[ubcdse]{Computational and Data-Enabled Science and Engineering Program, University at Buffalo, State University of New York, Buffalo, NY 14260, USA}
\cortext[cor]{Corresponding author}

\author[penn,ApplMath_penn]{Victor M. Preciado}
\ead{preciado@seas.upenn.edu}
\fntext[penn]{Department of Electrical and Systems Engineering, University of Pennsylvania, Philadelphia, PA 19104, USA}
\fntext[ApplMath_penn]{Applied Mathematics \& Computational Science, University of Pennsylvania, Philadelphia, PA 19104, USA.}
\author[naist,osaka]{Masaki Ogura\corref{cor}}
\ead{m-ogura@ist.osaka-u.ac.jp}
\fntext[naist]{Division of Information Science, Nara Institute of Science and Technology, Ikoma, Nara 630-0192, Japan}
\fntext[osaka]{Graduate School of Information Science and Technology, Osaka University, Suita, Osaka 565-0871, Japan}

\begin{abstract}
%
%
We study the stochastic susceptible-infected-susceptible model of epidemic processes on finite directed and weighted networks with arbitrary structure. We present a new lower bound on the exponential rate at which the probabilities of nodes being infected decay over time. This bound is directly related to the leading eigenvalue of a matrix that depends on the non-backtracking and incidence matrices of the network. The dimension of this matrix is $N+M$, where $N$ and $M$ are the number of nodes and edges, respectively. We show that this new lower bound improves on an existing bound corresponding to the so-called quenched mean-field theory. Although the bound obtained from a recently developed second-order moment-closure technique requires the computation of the leading eigenvalue of an $N^2\times N^2$ matrix, we illustrate in our numerical simulations that the new bound is tighter, while being computationally less expensive for sparse networks. We also present the expression for the corresponding epidemic threshold in terms of the adjacency matrix of the line graph and the non-backtracking matrix of the given network.
\end{abstract}

\begin{keyword}                           
networks, epidemic processes, stochastic processes, non-backtracking matrix, epidemic threshold.
\end{keyword}                             

\end{frontmatter}

\section{Introduction}

Epidemic processes are probably one of the most extensively studied dynamical processes in complex networks \cite{Keeling2005JRSocInterface,Barrat2008book,Masuda2013F1000,Pastorsatorras2015RevModPhys,KissMillerSimon2017book}. These processes can be used for modeling the spread of infectious diseases in contact networks, as well as news in (offline or online) social networks, or computer viruses in communication networks, to name a few applications. 
A fundamental question in the analysis of epidemic processes, in the case of both deterministic and stochastic models, is to quantify the total number of nodes being infected by the spread over time. In most epidemic models, we find two clearly differentiated dynamical phases: one phase in which an initial infection quickly dies out and another phase in which an infection may propagate to a large fraction of the network. The concept of epidemic threshold is used for characterizing the conditions separating these two dynamical phases.


Most of the existing stochastic epidemic models are Markov processes where the disease-free state is a unique absorbing state. This absorbing state is reached with probability one in finite time, regardless of the initial set of infected nodes or the values chosen for the parameters of the model. A critical distinction between the two phases described above is the expected time required to reach the disease-free state. In the first phase mentioned above, the epidemic dynamics converges exponentially fast towards the absorbing state. In contrast, in the second phase, this time can be exponentially long in terms of the number of nodes. It is also worth remarking that this observation is not applicable to stochastic epidemic processes taking place in infinite networks
\cite{Liggett1999book,Durrett2010PNAS} or deterministic models taking place in both finite and infinite networks
\cite{Pastorsatorras2001PhysRevLett,Barrat2008book,Pastorsatorras2015RevModPhys,KissMillerSimon2017book} because in both these cases it is possible for the disease to survive forever.
Therefore, for stochastic epidemic processes in finite networks,
the exponential rate at which the number of infected individuals decays toward zero, called the \textit{decay rate}, is a relevant characterization of the dynamics
\cite{Ganesh2005,Chakrabarti2008,VanMieghem2009a}. Intuitively, if the disease-free equilibrium takes a long time to be reached (in expectation), the decay rate would be close to zero. In contrast, if the infection dies out exponentially fast, the decay rate would be a positive value bounded away from zero. The decay rate can e.g. be used to measure the performance of control strategies aiming to eradicate an epidemic exponentially fast~\cite{Wan2008IET,Preciado2014,AbadTorres2016,Ogura2018c}. 

Generally speaking, finding the decay rate of stochastic epidemic processes in a large network is computationally hard. This is because
the number of possible states in the Markovian models typically used to model epidemics over networks grows exponentially in terms of the number of nodes in the network. Specifically, the decay rate is given by the leading eigenvalue of the transition-probability matrix of the Markovian model, whose dimension depends exponentially on the number of nodes. For example, in the case of the susceptible-infected-susceptible (SIS) model on $N$ nodes,
the exponential rate corresponds to the leading eigenvalue of a $2^N \times 2^N$ transition-rate  matrix~\cite{VanMieghem2009a}, which is computationally challenging to calculate for large networks. An alternative approach to the exact computation of the decay rate is to seek computationally feasible bounds. For example, for the SIS model, a lower bound can be obtained using a mean-field approximation. This approximation is based on a first-order moment-closure technique allowing us to compute a bound on the decay rate from the leading eigenvalue of an $N\times N$ matrix~\cite{Ganesh2005,Preciado2014}. However, this mean-field approximation can result in a loose bound for many networks~\cite{Ogura2018SystControlLett}. To increase the accuracy of the approximation, the authors proposed a tighter bound on the decay rate using a second-order moment-closure techniques~\cite{Ogura2018SystControlLett}. This tighter bound, however, requires the computation of the leading eigenvalue of an $N^2\times N^2$ matrix, which can be computationally prohibitive when analyzing epidemic processes in large networks.  

In the present work, we derive a new lower bound on the decay rate of the stochastic SIS model in an arbitrary finite network. This new bound depends on the leading eigenvalue of an $(N+M) \times (N+M)$ matrix, where $M$ is the number of directed edges; hence, for sparse networks --- such as networks with a bounded maximum degree --- the proposed lower bound is computationally more tractable than the bound derived in \cite{Ogura2018SystControlLett}. Our lower bound is based on an alternative second-order moment-closure technique aiming to overcome the computational challenges of existing second-order moment-closure techniques. The new bound depends on the non-backtracking matrix \cite{Hashimoto1989AdvStudPureMath,Alon2007CommContempMath}.
The non-backtracking matrix has recently gained popularity in the network science community because it is the basis of efficient and theoretically appealing techniques for community detection, network centralities, and others (see references in~\cite{Masuda2017PhysRep}).
We theoretically prove that the new lower bound 
is tighter than the first-order lower bound. We also show that our new lower bound is numerically more accurate than the bound obtained in~\cite{Ogura2018SystControlLett}.
We also present a new epidemic threshold, which corresponds to our lower bound on the decay rate. The new epidemic threshold is given in terms of the adjacency matrix of the line graph and the non-backtracking matrix of the given network.


\section{Problem statement}\label{sec:problemStatment}


We start with mathematical preliminaries. A directed graph is defined as the pair $\mathcal G = (\mathcal V, \mathcal E)$, where $\mathcal V =  \{v_1, \dotsc, v_N\}$ is a finite ordered set of nodes, $N$ is the number of nodes, and $\mathcal E \subset \mathcal V \times \mathcal V$ is a set of directed edges. By definition, $(v, v') \in \mathcal E$ indicates that there is an edge from $v$ to $v'$.
%
%
%
The adjacency matrix of~$\mathcal{G}$ is an $N\times N$ matrix of which
the $(i, j)$-th entry is equal to $1$ if $(v_i, v_j) \in \mathcal E$ and $0$ otherwise.
%
%
An in-neighbor of $v$ is a node $v^{\prime}$ such that $(v^{\prime}, v)\in \mathcal E$. 

We denote the identity and the zero matrices by $I$ and~$O$, respectively. 
A real matrix~$A$ (or a vector as its special case) is said to be nonnegative,
denoted by $A\geq 0$, if all the entries of~$A$ are nonnegative. If all the entries of~$A$ are positive, then $A$ is said to be positive. We say that $A\leq B$, where $A$ and $B$ are of the same dimension, whenever $B-A\geq 0$. A square matrix~$A$ is said to be Metzler if all its off-diagonal entries are nonnegative~\cite{Farina2000}. If $A$ is Metzler, it holds true that $e^{At} \geq
0$ for all $t\geq 0$ \cite{Farina2000}. For a Metzler matrix~$A$, the maximum real part of the eigenvalues of~$A$ is denoted
by~$\lambda_{\max}(A)$. For any matrix $A$, the spectral radius is the largest absolute value of its eigenvalues and denoted by $\rho(A)$.

We study the stochastic SIS model on networks, which is also known as the contact process in the probability theory literature \cite{Liggett1999book}. This model is defined as follows: let $\mathcal G = (\mathcal V, \mathcal E)$ be a directed graph. At any given continuous time~$t \geq 0$, each node is in one of the two possible states, namely, susceptible (i.e. healthy) or infected. An infected node $v_i$ stochastically transits to the susceptible state at a constant instantaneous rate of $\delta_i > 0$, which is called the \textit{recovery rate} of node~$v_i$. Whenever $v_i$ is susceptible
and its infected in-neighbor $v_j$ is infected, then $v_j$ stochastically and independently infects $v_i$ at a constant instantaneous rate of $\beta_{ji}$. We call $\beta_{ji} > 0$ the \textit{infection rate}. Note that the present SIS model effectively accommodates directed and weighted networks because the infection rate $\beta_{ij}$ is allowed to depend on $v_i$ and $v_j$.

The SIS model is a continuous-time Markov process with $2^N$ possible states
\cite{VanMieghem2009a,Pastorsatorras2015RevModPhys,KissMillerSimon2017book} and
has a unique absorbing state in which all the $N$ nodes are susceptible. Because this absorbing state is reachable from any other state, the dynamics of the SIS model reaches the disease-free absorbing equilibrium in finite time with probability one. The aim of the present paper is to study how fast this disease-free equilibrium is reached in expectation. This can be quantified via the following definition:
\begin{definition}
\label{defn:new}
Let $p_i(t)$ be the probability that the $i$th node is infected at time $t$. 
The \textit{decay rate} of the SIS model is defined by
\begin{equation}
\gamma = - \limsup_{t\to\infty} \frac{\log \sum_{i=1}^N p_i(t)}{t},
\end{equation}
where all nodes are assumed to be infected at $t=0$.
%
%
\end{definition}
Definition~\ref{defn:new} states that $\sum_{i=1}^N p_i(t)$, which is equal to the expected number of infected nodes at time $t$, roughly decays exponentially in time as $\propto e^{-\gamma t}$. Because the number of infected nodes always becomes zero in finite time, the SIS model always has a positive decay rate (potentially close to zero), even if the infection rate is large.


The decay rate has theoretically been studied
in continuous time \cite{VanMieghem2009a} and discrete time \cite{Chakrabarti2008} 
SIS models and is closely related to other quantities of interest, such as the epidemic threshold~\cite{VanMieghem2009a} and the
mean time to absorption~\cite{Ganesh2005}. However, exact computation of the decay rate is computationally demanding in
practice. Even in the homogeneous case, where all nodes share the same infection
and recovery rates, the decay rate equals the modulus of the largest real part
of the non-zero eigenvalues of a $2^N\times 2^N$ matrix representing the
infinitesimal generator of the Markov chain~\cite{VanMieghem2009a}. 

Due to the difficulty of its computation, several approaches have been proposed to bound the decay rate. A first-order lower bound, which corresponds to the so-called quenched mean-field approximation \cite{Pastorsatorras2015RevModPhys},
is derived as follows \cite{Ganesh2005,Preciado2014}:
Let $\bm p(t) = \left[p_1(t), \ldots, p_N(t)\right]^{\top}$, where $\top$ represents the matrix transposition. We define the $N\times N$ matrices $B$ and $D$ by
\begin{equation}
B_{ij} = 
\begin{cases}
\beta_{ij},& \mbox{if $(v_i, v_j)\in \mathcal E$,}
\\
0,&\mbox{otherwise,}
\end{cases}
\end{equation}
and
\begin{equation}
D = \text{diag}(\delta_1, \ldots, \delta_N),
\end{equation}
where $\text{diag}(\alpha_1, \ldots, \alpha_N)$ is the $N\times N$ diagonal matrix whose diagonal elements are equal to $\alpha_1$, $\ldots$, $\alpha_N$. Note that matrix $B$ fully contains the information about the adjacency matrix of $\mathcal G$. Then, one can show that $\bm p(t)
\leq e^{(B^\top-D)t} \bm p(0)$, which implies that
\begin{equation}
\label{eq:rho1}
\gamma \geq \gamma_1 \equiv -\lambda_{\max}(B^\top - D),
\end{equation}
where we will call $\gamma_1$ the first-order lower bound.
Although this lower bound is computationally efficient to find, 
there can be a large discrepancy between $\gamma_1$ and the true decay rate $\gamma$~\cite{Ogura2018SystControlLett}.

A second lower bound on the decay rate was proposed in a recent study \cite{Ogura2018SystControlLett} and summarized in
%
%
\ref{sec:Ogura2018 bound}.
This second bound depends on the leading eigenvalue of an $N^2 \times N^2$ matrix, which is computationally demanding when $N$ is relatively large. In this paper, we propose an alternative lower bound on the decay rate that is computationally efficient, provably more accurate than the first-order bound, and numerically tighter than the second bound described in \ref{sec:Ogura2018 bound}.

\section{Main results} \label{sec:main}

\subsection{A lower bound on the decay rate}

To state our mathematical results, we label the directed edges of a given network $\mathcal{G}$ as $\{e_1, \ldots, e_M\}$, where the $\ell$th edge ($1\le \ell\le M$) is represented by $e_\ell = (i_\ell, j_\ell)$, i.e., the edge is directed from node $v_{i_\ell}$ to node $v_{j_\ell}$. Although we use notations $i$ and $j$ to represent general nodes $v_i$ and $v_j$ in the following text, $i$ and $j$ with a subscript will exclusively represent the starting and terminating nodes of an edge, thus avoiding confusions. Define the \textit{incidence} matrix $C\in \mathbb{R}^{N\times M}$ of the network $\mathcal{G}$ by \cite{Wilson1972-2010book,Newman2010book}
\begin{equation}
C_{i\ell} = \begin{cases}
1, & \mbox{if $j_\ell = i$}, 
\\
-1, & \mbox{if $i_\ell = i$}, 
\\
0, & \mbox{otherwise}.
\end{cases}
\end{equation}
Also, define the \textit{non-backtracking} matrix $H\in \mathbb{R}^{M\times M}$ of $\mathcal{G}$ by \cite{Hashimoto1989AdvStudPureMath,Alon2007CommContempMath}
\begin{equation}
H_{\ell m} = \begin{cases}
1, & \mbox{if $j_\ell = i_m$ and $j_m \neq i_\ell$}, 
\\
0, & \mbox{otherwise.}
\end{cases}
\end{equation}

The main result of the present paper is stated as follows:
\begin{theorem}
\label{thm:main}
Define the $(N+M)\times (N+M)$ Metzler matrix 
\begin{equation}\label{eq:defMathcalA}
\mathcal A = \begin{bmatrix}
-D & C_+ B'
\\
D_2' C_-^\top & H^\top B' - B' - D_1' - D_2'
\end{bmatrix}, 
\end{equation}
where
\begin{align}\label{eq:defB'D'}
B' =& \text{diag}(\beta_{i_1 j_1}, \ldots, \beta_{i_M j_M}),\\
D_1' =& \text{diag}(\delta_{i_1}, \ldots, \delta_{i_M}),\\ 
D_2' =& \text{diag}(\delta_{j_1}, \ldots, \delta_{j_M}),
\end{align}
$C_+ = \max(C, 0)$, and $C_- = \max(-C, 0)$; $C_+$ and $C_-$ denote the positive and negative parts of the incidence matrix~$C$, respectively.
Then, we obtain the following lower bound on the decay rate:
\begin{equation}
\label{eq:bound}
\gamma \geq \gamma_2 \equiv -\lambda_{\max}(\mathcal A). 
\end{equation}
\end{theorem}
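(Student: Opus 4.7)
The plan is to close the SIS moment hierarchy at the edge level, using an auxiliary variable attached to each directed edge that records ``source infected, target susceptible.'' Concretely, for each edge $e_\ell = (i_\ell,j_\ell)$ I would introduce $q_\ell(t) = \Pr(X_{i_\ell}(t)=1,\,X_{j_\ell}(t)=0) = E[X_{i_\ell}(1-X_{j_\ell})]$, where $X_i(t) \in \{0,1\}$ is the state of node $v_i$. Stacking $u(t) = [p(t)^\top,\ q(t)^\top]^\top \in \mathbb{R}^{N+M}$, the goal is to prove the differential inequality $\dot u \leq \mathcal{A} u$ componentwise, and then invoke the comparison principle for Metzler systems.

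The node equation comes out exactly: applying the Kolmogorov forward equation to $X_i$ gives $\dot p_i = -\delta_i p_i + \sum_{(k,i)\in\mathcal E}\beta_{ki}\,E[(1-X_i)X_k] = -\delta_i p_i + (C_+ B'q)_i$, using the definition of $C_+$. For the edge equation, I would apply the generator to $X_{i_\ell}(1-X_{j_\ell})$; writing $a=i_\ell$, $b=j_\ell$, this produces four groups of terms. The recovery terms contribute $-\delta_a q_\ell + \delta_b(p_a - q_\ell)$, matching $-D_1' - D_2'$ on $q$ together with the block $D_2' C_-^\top$ acting on $p$ (since $(C_-^\top p)_\ell = p_{i_\ell}$). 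The infection-of-source term is $\sum_{(k,a)\in\mathcal E}\beta_{ka}E[(1-X_a)(1-X_b)X_k]$; the summand vanishes when $k=b$ because $(1-X_b)X_b=0$, and for $k\neq b$ it satisfies $(1-X_a)(1-X_b)X_k \leq (1-X_a)X_k$, so the whole sum is upper-bounded by $\sum_{m:\,j_m=a,\,i_m\neq b}\beta_{i_m j_m}q_m = (H^\top B' q)_\ell$. This is precisely where the non-backtracking structure emerges: the exclusion $k\neq b$ is exactly the non-backtracking rule encoded by $H^\top$. The infection-of-target term $-\sum_{(k,b)\in\mathcal E}\beta_{kb}E[X_a(1-X_b)X_k]$ consists of nonpositive summands; keeping only $k=a$ (which gives $-\beta_{ab}q_\ell = -(B'q)_\ell$ by idempotency $X_a^2=X_a$) and dropping the rest preserves the upper bound and produces the $-B'$ term. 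Assembling yields $\dot q \leq (H^\top B' - B' - D_1' - D_2')q + D_2' C_-^\top p$, so $\dot u \leq \mathcal{A}u$ as required.

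To finish, I would verify that $\mathcal A$ is Metzler (its off-diagonal entries come either from the nonnegative blocks $C_+ B'$, $D_2' C_-^\top$, $H^\top B'$ or lie on the diagonal), which ensures $e^{\mathcal A t}\geq 0$. Since $u(0) \geq 0$, the standard comparison principle for Metzler systems then gives $0 \leq u(t) \leq e^{\mathcal A t} u(0)$ entrywise. Summing the first $N$ coordinates, $\sum_{i=1}^N p_i(t) \leq \onevtop_{N+M}\, e^{\mathcal A t} u(0) \leq K\,e^{\lambda_{\max}(\mathcal A)\,t}$ for some constant $K$, whence $\gamma = -\limsup_{t\to\infty} t^{-1}\log\sum_i p_i(t) \geq -\lambda_{\max}(\mathcal A) = \gamma_2$. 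The main obstacle is the second paragraph: getting the signs right so that all four closure steps (vanishing of $k=b$ in the source-infection sum, the domination bound producing $H^\top B'$, the idempotency trick isolating $-B'$, and the fact that the discarded target-infection terms are nonpositive) simultaneously produce an inequality whose right-hand side is a \emph{Metzler} matrix acting on $u$; any sign flip would break the comparison-principle argument.
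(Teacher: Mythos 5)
Your proposal is correct and follows essentially the same route as the paper's proof: the same edge-level variables $q_\ell = E[x_{i_\ell}(1-x_{j_\ell})]$, the same four closure steps (exact vanishing of the $k=j_\ell$ term, the bound $E[(1-x_i)(1-x_j)x_k]\le E[(1-x_i)x_k]$ yielding $H^\top B'$, retaining only $k=i_\ell$ with $x_{i_\ell}^2=x_{i_\ell}$ for the $-B'$ term, and $p_{ij}=p_i-q_{ij}$ for the $D_2'C_-^\top$ block), and the same Metzler comparison argument. The only cosmetic difference is that the paper writes the closed system as an equality minus a nonnegative slack $\bm\epsilon(t)$ and integrates by variation of constants, whereas you state the differential inequality directly; these are equivalent.
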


\begin{proof}
Define the binary variable $x_i(t)$ such that $x_i(t)=0$ or $x_i(t)=1$ if node~$v_i$ is susceptible or infected at time~$t$, respectively. The variables $x_1(t)$, $\ldots$, $x_N(t)$ obey a system of stochastic differential equations with Poisson jumps, and
the expectation $p_i(t) = E[x_i(t)]$ obeys
\begin{align}
\frac{dp_i}{dt} 
&= \left( \sum_{j=1}^N E[(1-x_i)x_j]\beta_{ji} \right) - \delta_i E[x_i]\notag\\
&
= \left( \sum_{j=1}^N \beta_{ji} q_{ji} \right) - \delta_i p_i,
\label{eq:dot p_i}
\end{align}
where 
\begin{equation}\label{eq:def:qij}
q_{ji}(t) = E[x_j(t)(1-x_i(t))]
\end{equation}
is equal to the joint probability that node $v_j$ is infected and node $v_i$ is susceptible at time~$t$.

Using the identities
\begin{equation}
\sum_{j=1}^N \beta_{ji} q_{ji} 
= 
\sum_{\ell=1; j_\ell = i}^{M} \beta_{i_\ell j_\ell} q_{i_\ell j_\ell} = \sum_{\ell=1}^{M}  [C_+]_{i\ell} 
[B']_{\ell\ell} q_{i_\ell j_\ell},
\end{equation}
one obtains 
\begin{equation}
\frac{dp_i}{dt} 
= \left(\sum_{\ell=1}^{M}  [C_+ B']_{i\ell} q_{i_\ell j_\ell} \right) - \delta_i p_i \quad (i \in \{1, \ldots, N\}).
\label{eq:dp_i/dt}
\end{equation}
Equation~\eqref{eq:dp_i/dt} is equivalent to
\begin{equation}
\label{eq:dpdt}
\frac{d\bm p}{dt} = C_+ B' \bm q - D\bm p,
\end{equation}
where we remind that $\bm p(t) = \left[p_1(t), \ldots, p_N(t)\right]^{\top}$ and define
\begin{equation}
\bm q(t) \equiv \left[ q_{i_1 j_1}(t), \ldots,
q_{i_M j_M}(t) \right]^{\top}.
\end{equation}
Using the notation~$p_{ij}(t) \equiv E[x_i(t) x_j(t)]$, one obtains
\begin{align}
\frac{dq_{ij}}{dt} 
&= 
- \left(\sum_{k=1}^N E[x_i (1-x_j) x_k] \beta_{kj}\right) + \delta_j E[x_ix_j]\notag\\ 
&\quad + \left(\sum_{k=1}^N E[(1-x_i) (1-x_j)x_k] \beta_{ki}\right) - \delta_i E[x_i (1-x_j)] \notag\\
&\leq 
- \beta_{ij} q_{ij} + \delta_j p_{ij}
+ \left(\sum_{k=1; k\neq j}^N \beta_{ki} q_{ki}\right) - \delta_i q_{ij}. 
\label{eq:20}
\end{align}
The first term on the right-hand side of the first line in Eq.~\eqref{eq:20} represents the rate at which node $v_j$ is infected when node $v_i$ is infected and node $v_j$ is susceptible; the second term represents the rate at which $v_j$ recovers when both $v_i$ and $v_j$ are infected; the third term represents the rate at which $v_i$ is infected when both $v_i$ and $v_j$ are susceptible; the fourth term represents the rate at which $i$ recovers when $v_i$ is infected and $v_j$ is susceptible.
To derive the last inequality in Eq.~\eqref{eq:20}, for the first term on the right-hand side, we ignored
all the $k$ values but $k=i$ in the summation and used $x_i^2 = x_i$. For the third term on the right-hand side, we used
$E[(1-x_i) (1-x_j)x_k] \le E[(1-x_i) x_k]$.

By combining Eq.~\eqref{eq:20} and
$p_{ij} = E[x_ix_j] = E[x_i] - E[x_i(1-x_j)] = p_i  - q_{ij}$, one obtains
\begin{equation}
\frac{dq_{i_\ell j_\ell}}{dt} 
\leq 
- (\beta_{i_\ell j_\ell} + \delta_{i_\ell} +
\delta_{j_\ell}) q_{i_\ell j_\ell} + \delta_{j_\ell} p_{i_\ell} + \sum_{k=1; k\neq j_\ell}^N \beta_{ki_\ell} q_{k i_\ell}.
\label{eq:dot q_ij}
\end{equation}
By combining Eq.~\eqref{eq:dot q_ij} with
\begin{equation}
\label{eq:Cpell}
[C_-^\top \bm p]_\ell 
%
%
= \sum_{i=1}^N [C_-]_{i \ell } p_i
= p_{i_\ell}
\end{equation}
and
\begin{equation}
\label{eq:Hqell}
\begin{multlined}
[H^\top B' \bm q]_\ell
= \sum_{m=1}^{M} H_{m\ell} B'_{mm} q_m
= \sum^M_{\substack{m=1; j_m = i_\ell,\\ j_\ell \neq i_m}} \beta_{i_m j_m} q_{i_m j_m}\\
= \sum_{k=1; k\neq j_\ell}^N \beta_{k i_\ell} q_{k i_\ell},
\end{multlined}
\end{equation}
one obtains
\begin{align}
\frac{dq_{i_\ell j_\ell}}{dt} 
\leq& 
-([B']_{\ell\ell} + [D'_1]_{\ell\ell} + [D'_2]_{\ell\ell}) q_{i_\ell j_\ell} + [D'_2]_{\ell\ell}
[C_-^\top \bm p]_\ell  + [H^\top B' \bm q]_\ell \notag\\
=&
-[B' \bm q]_{\ell} - [D'_1 \bm q]_{\ell} - [D'_2 \bm q]_{\ell}
+ [D'_2 C_-^\top \bm p]_\ell  + [H^\top B' \bm q]_\ell. 
\end{align}
By stacking this inequality with respect to $\ell$, one observes that there exists an $\mathbb{R}^{M}_+$-valued function $\bm \epsilon (t)$ defined for $t\in [0, \infty)$ such that 
\begin{equation}\label{eq:dq/dt}
\frac{d \bm q}{dt} = D_2' C_-^\top \bm p + (H^\top B' - B' - D_1' - D'_2) \bm q - \bm \epsilon. 
\end{equation}

Equations~\eqref{eq:dpdt} and~\eqref{eq:dq/dt} imply
\begin{equation}\label{eq:pSIS}
\frac{d}{dt}\begin{bmatrix}
\bm p\\ \bm q
\end{bmatrix}
=
\mathcal A \begin{bmatrix}
\bm p\\ \bm q
\end{bmatrix} - \begin{bmatrix}
\bm 0 \\ \bm \epsilon
\end{bmatrix}. 
\end{equation}
Because $\mathcal A$ is Metzler and $\bm \epsilon(t)$ is entry-wise nonnegative for every
$t\geq 0$, one obtains
\begin{align}
\begin{bmatrix}
\bm p(t) \\ \bm q(t)
\end{bmatrix}
&= 
e^{\mathcal A t} 
\begin{bmatrix}
\bm p(0) \\ \bm q(0)
\end{bmatrix}
-
\int_0^t
e^{\mathcal A(t-\tau)}
\begin{bmatrix}
\bm 0 \\ \bm \epsilon(\tau)
\end{bmatrix}\,{\rm d}\tau \notag\\
&
\leq e^{\mathcal
A t}\begin{bmatrix}
\bm p(0) \\ \bm q(0)
\end{bmatrix}, 
\end{align}
which proves Eq.~\eqref{eq:bound}.
\end{proof}

Next, to prove that the new lower bound is tighter than the first-order lower bound,
we start by stating (and proving) a convenient adaptation of the classical Perron--Frobenius theorem \cite{Hogben2006} for nonnegative matrices to the case of Metzler matrices.
\begin{lemma}\label{lem:aPF}
Let $M$ be an irreducible Metzler matrix. 
\begin{enumerate}
\item \label{item:aPF:eq} There exists a positive vector~$\bm v$ such that $M \bm v = \lambda_{\max}(M)\bm  v$.
\item \label{item:aPF:ineq}Assume that there exist a real number $\mu$ and a nonzero vector $\bm u\geq 0$ such that $M\bm u \leq \mu \bm u$ and $M\bm u\neq \mu \bm u$. Then, $\lambda_{\max}(M)< \mu$. 
\end{enumerate}
\end{lemma}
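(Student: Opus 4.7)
My plan is to reduce the Metzler case to the classical Perron--Frobenius theorem for nonnegative matrices by a diagonal shift. Specifically, pick any scalar $c \geq 0$ large enough that every diagonal entry of $M + cI$ is nonnegative; then $N \equiv M + cI$ is entry-wise nonnegative, shares the off-diagonal structure of $M$, and is therefore irreducible. Since eigenvalues of $N$ equal those of $M$ shifted by $c$, and the Perron root of a nonnegative matrix equals its spectral radius, we have $\lambda_{\max}(M) = \rho(N) - c$, and every (right or left) eigenvector of $N$ at $\rho(N)$ is also an eigenvector of $M$ at $\lambda_{\max}(M)$.

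For part~\ref{item:aPF:eq}, I would simply invoke the classical Perron--Frobenius theorem on $N$: irreducibility of the nonnegative matrix $N$ guarantees a strictly positive eigenvector $\bm v$ satisfying $N\bm v = \rho(N)\bm v$. Translating back, $M\bm v = \lambda_{\max}(M)\bm v$, which is the desired conclusion.

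For part~\ref{item:aPF:ineq}, the plan is to first shift and then use a left-eigenvector argument. From $M\bm u \leq \mu \bm u$ we immediately get $N \bm u \leq (\mu+c)\bm u$ with $N\bm u \neq (\mu+c)\bm u$, where $\bm u \geq 0$ is nonzero. Applying part~\ref{item:aPF:eq} to the irreducible Metzler matrix $M^{\top}$ (equivalently, to $N^{\top}$) produces a strictly positive left eigenvector $\bm w > 0$ with $\bm w^{\top} N = \rho(N) \bm w^{\top}$. Left-multiplying the inequality by $\bm w^{\top}$ yields
\begin{equation*}
\rho(N) \bm w^{\top} \bm u = \bm w^{\top} N \bm u \leq (\mu + c)\bm w^{\top} \bm u,
\end{equation*}
and since $\bm w > 0$ and $\bm u \geq 0$ is nonzero we have $\bm w^{\top} \bm u > 0$, giving $\rho(N) \leq \mu + c$. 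To upgrade this to strict inequality, I would argue by contradiction: if $\rho(N) = \mu+c$, then $\bm w^{\top}\bigl((\mu+c)\bm u - N\bm u\bigr) = 0$, but the vector in parentheses is nonnegative and $\bm w$ is strictly positive, forcing $N\bm u = (\mu+c)\bm u$ and contradicting the hypothesis. Hence $\rho(N) < \mu + c$, i.e., $\lambda_{\max}(M) < \mu$.

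The main subtlety I expect is justifying the existence of the strictly positive \emph{left} Perron eigenvector used in part~\ref{item:aPF:ineq}: this needs the observation that $M^{\top}$ is also irreducible and Metzler, so that part~\ref{item:aPF:eq} applies to it. Beyond that, the argument is a routine shift-and-translate reduction, and no deeper tool than the standard Perron--Frobenius theorem for irreducible nonnegative matrices is required.
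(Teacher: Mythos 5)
Your proposal is correct and follows essentially the same route as the paper: shift $M$ by a multiple of the identity to obtain an irreducible nonnegative matrix, apply the classical Perron--Frobenius theorem, and translate the conclusions back. The only difference is that for part~2 the paper simply cites the standard Perron--Frobenius fact that $M'\bm u \leq (\mu+\nu)\bm u$ with $M'\bm u \neq (\mu+\nu)\bm u$ forces $\rho(M') < \mu+\nu$, whereas you supply a self-contained (and correct) proof of that fact via the strictly positive left Perron eigenvector of the shifted matrix.
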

\begin{proof}
Let $\nu$ be a real number such that matrix $M' = M + \nu I$ is nonnegative. Note that the spectral radius of $M^{\prime}$ satisfies $\rho(M') = \lambda_{\max}(M)+\nu$. To prove the first statement, we use the Perron--Frobenius theorem (see Fact~5.b in \cite[Chapter~9.2]{Hogben2006}), which guarantees that $M'$ has a positive eigenvector~$\bm v$ corresponding to the eigenvalue $\rho(M')$. The vector $\bm v$ satisfies $M\bm v = M'\bm v - \nu \bm v = \left[\rho(M')-\nu\right] \bm v = \lambda_{\max}(M) \bm v$.

To prove the second statement, assume that a nonzero vector $\bm u\geq 0$ satisfies $M\bm u \leq \mu \bm u$ and $M\bm u\neq \mu \bm u$. Then, the nonnegative and irreducible matrix $M'$ satisfies $M' \bm u \leq (\mu + \nu) \bm u$ and $M' \bm u \neq (\mu+\nu) \bm u$. Therefore, the Perron--Frobenius theorem (see Fact~7.b in \cite[Chapter~9.2]{Hogben2006}) guarantees that $\rho(M')< \mu+\nu$, which yields $\lambda_{\max}(M) = \rho(M')-\nu < \mu$.
\end{proof}

The following theorem proves that the bound proposed in Eq.~\eqref{eq:bound} improves the first-order bound given by Eq.~\eqref{eq:rho1}.

\begin{theorem}\label{propORtheorem:strictInequality}
If the network is strongly connected, then $\gamma_2 > \gamma_1$.
\end{theorem}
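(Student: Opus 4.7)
The plan is to produce a nonnegative nonzero witness $\bm w \in \mathbb{R}^{N+M}$ satisfying $\mathcal A \bm w \le (-\gamma_1)\bm w$ componentwise with $\mathcal A \bm w \ne (-\gamma_1)\bm w$; Lemma~\ref{lem:aPF}(\ref{item:aPF:ineq}), applied with $M = \mathcal A$ and $\mu = -\gamma_1$, then gives $\lambda_{\max}(\mathcal A) < -\gamma_1$, which is exactly $\gamma_2 > \gamma_1$. Strong connectivity of $\mathcal G$ makes the Metzler matrix $B^\top - D$ irreducible, so Lemma~\ref{lem:aPF}(\ref{item:aPF:eq}) supplies a positive eigenvector $\bm v$ with $(B^\top - D)\bm v = \lambda \bm v$, where $\lambda = -\gamma_1$; componentwise, this reads $\sum_{k : (k,i) \in \mathcal E}\beta_{ki} v_k = (\lambda + \delta_i) v_i$ for every $i$. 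The natural lift is $\bm w = [\bm p^\top,\,\bm q^\top]^\top$ with $\bm p = \bm v$ and $q_\ell = v_{i_\ell}$ for each directed edge $e_\ell = (i_\ell, j_\ell)$, i.e., copying the eigenvector value at the source of each edge.

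I would then verify the inequality block by block. The top $N$ components of $\mathcal A \bm w$ are $-\delta_i v_i + \sum_{\ell:j_\ell = i}\beta_{i_\ell i} v_{i_\ell}$, which after re-indexing by the source $k = i_\ell$ becomes $-\delta_i v_i + \sum_{k:(k,i)\in\mathcal E}\beta_{ki} v_k = \lambda v_i$ by the eigenvalue identity, producing \emph{exact} equality in the top block. For the bottom block, the identities $[C_-^\top \bm p]_\ell = v_{i_\ell}$ and $[H^\top B' \bm q]_\ell = \sum_{k:(k,i_\ell)\in\mathcal E,\,k\ne j_\ell}\beta_{k i_\ell} v_k$ established in the proof of Theorem~\ref{thm:main}, together with the eigenvalue identity applied at $v_{i_\ell}$, collapse the $\ell$-th bottom component to $\lambda q_\ell - \beta_{i_\ell j_\ell} v_{i_\ell}$, possibly augmented by a further non-positive term $-\beta_{j_\ell i_\ell} v_{j_\ell}$ when $(j_\ell, i_\ell) \in \mathcal E$. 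Because $\beta_{i_\ell j_\ell} > 0$ and $v_{i_\ell} > 0$ for every edge, this value is strictly less than $\lambda q_\ell$ at every $\ell$; combined with the top-block equality this yields the componentwise inequality and the required strict violation $\mathcal A \bm w \ne \lambda \bm w$ demanded by Lemma~\ref{lem:aPF}(\ref{item:aPF:ineq}).

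The remaining and most delicate step is to verify that $\mathcal A$ itself is irreducible. Reading off the block sign pattern, the directed graph associated with $\mathcal A$ on the index set $\{1,\dots,N\}\cup\{e_1,\dots,e_M\}$ has arcs (i) $i \to e_\ell$ whenever $i = i_\ell$, (ii) $e_\ell \to j$ whenever $j = j_\ell$, and (iii) $e_\ell \to e_m$ whenever $j_\ell = i_m$ and $j_m \ne i_\ell$. Concatenating (i) with (ii) walks from node $i$ to node $j_\ell$ via $e_\ell$ whenever $(i,j_\ell) \in \mathcal E$, so this length-two detour through the edge layer reproduces $\mathcal G$ on the node layer; strong connectivity of $\mathcal G$ therefore transfers to strong connectivity among the $N$ node indices, and every edge index is linked to the node layer by its source (via (i)) and its target (via (ii)), making the full augmented graph strongly connected and $\mathcal A$ irreducible. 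Lemma~\ref{lem:aPF}(\ref{item:aPF:ineq}) then closes the argument. The genuine subtlety I expect is precisely this irreducibility check: the non-backtracking restriction in (iii) can cut some direct edge-to-edge transitions, and one must argue that this is harmless because any needed edge-to-edge connection can always be rerouted through the node layer using (ii) followed by (i).
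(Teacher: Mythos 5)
Your proposal is correct and follows essentially the same route as the paper: the same witness vector $\bm\xi$ (eigenvector $\bm v$ of $B^\top - D$ on the node block, $v_{i_\ell}$ copied to each edge), exact equality in the top block, a strict deficit of at least $\beta_{i_\ell j_\ell} v_{i_\ell}$ in each bottom component, and the same irreducibility argument for $\mathcal A$ that routes all connections through the node layer via the $C_+B'$ and $D_2'C_-^\top$ blocks so that the non-backtracking restriction is harmless. The only cosmetic difference is that you carry out the bottom-block computation componentwise, whereas the paper packages it through the identities $A_{L(\mathcal G)}^\top = C_-^\top C_+$ and $C_-^\top D = D_1' C_-^\top$.
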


\begin{proof}
Lemma~\ref{lem:aPF}.\ref{item:aPF:eq} implies that the irreducible Metzler matrix $B^\top - D$ has a positive eigenvector $\bm v$
corresponding to the eigenvalue $-\gamma_1$, i.e.,
\begin{equation}\label{eq:WeNeedTranspose}
(B^\top - D) \bm v = - \gamma_1 \bm v. 
\end{equation}
Define the positive $(N+M)$-dimensional vector $\bm \xi$ as
\begin{equation}
\bm \xi = \begin{bmatrix}
\bm v\\ \bm w
\end{bmatrix},\ \bm w = C_-^\top \bm v=\left[ v_{i_1}, \ldots, v_{i_M}\right]^{\top}. 
\end{equation} 
Let us define $\bm \zeta \equiv \mathcal A \bm \xi$ and decompose $\bm \zeta$ as 
\begin{equation}
\bm \zeta = \begin{bmatrix}
\bm \zeta_1 \\ \bm \zeta_2
\end{bmatrix}, 
\end{equation}
where $\bm \zeta_1$ and $\bm \zeta_2$ are $N$- and $M$-dimensional vectors, respectively. Simple algebraic manipulations yield
\begin{equation}\label{eq:Btop}
B^\top = C_+ B' C_-^\top.
\end{equation}
Therefore, one obtains 
$C_+ B' \bm w = C_+ B' C_-^\top \bm v = B^\top\bm v$. This implies that
\begin{align}
\bm \zeta_1 &= -D \bm v + C_+ B' \bm w \notag\\
&=
-D \bm v + B^\top \bm v \notag\\
&=
- \gamma_1 \bm v.
\label{eq:zeta_1}
\end{align} 
One also obtains
\begin{align}
\bm \zeta_2 &= D_2'C_-^\top \bm v + (H^\top B' - B' - D_1' - D_2') \bm w
\notag\\
&= (H^\top B' - B' - D_1') \bm w
\notag\\
&\leq (A_{L(\mathcal G)}^\top B'-B'-D_1')\bm w,
\label{eq:eta2}
\end{align}
where $A_{L(\mathcal G)}$ denotes the adjacency matrix of the line graph $L(\mathcal G)$ defined by 
\begin{equation}
\label{eq:def:ALG}
[A_{L(\mathcal G)}]_{\ell m} = \begin{cases}
1,& \mbox{if {$j_\ell = i_m$}} , 
\\
0,& \mbox{otherwise}.
\end{cases}
\end{equation}
Matrix $A_{L(\mathcal G)}$ satisfies 
\begin{equation}
\label{eq:C-Tc+}
A_{L(\mathcal G)}^{{\top}} = C_-^\top C_+
\end{equation}
because 
\begin{align}
[C_-^\top C_+]_{\ell m} 
&= \sum_{i=1}^N [C_-]_{i\ell} [C_+]_{im}
\notag\\
&=
\begin{cases}
1,&\mbox{if $i_\ell = j_m$},\\
0,&\mbox{otherwise.}
\end{cases}
\end{align}
Because simple algebraic manipulations yield $C_-^\top D = D_1' C_-^\top$, using Eqs.~\eqref{eq:WeNeedTranspose}, \eqref{eq:Btop}, and \eqref{eq:C-Tc+}, one obtains
\begin{align}
(A_{L(\mathcal G)}^\top B' - D_1') \bm w
&= 
C_-^\top C_+ B' C_-^\top \bm v -  D_1' C_-^\top \bm v 
\notag\\
&=
C_-^\top (B^\top -  D) \bm v
\notag\\
&= -\gamma_1 C_-^\top  \bm v
\notag\\
&=
-\gamma_1  \bm w.
\label{eq:ww}
\end{align}
Using Eqs.~\eqref{eq:eta2} and \eqref{eq:ww}, one obtains
\begin{equation}
\label{eq:for eta2 new}
\bm \zeta_2 \leq -\gamma_1 \bm w - B' \bm w.
\end{equation}
Because any entry of $B' \bm w$ is positive, Eqs.~\eqref{eq:zeta_1} and
%
%
\eqref{eq:for eta2 new} guarantee that positive vector~$\bm \xi$ satisfies $\mathcal A \bm \xi \leq - \gamma_1 \bm \xi$ and $\mathcal A \bm \xi \neq - \gamma_1 \bm \xi$. Because $\mathcal A$ is irreducible, as will be shown later, Lemma~\ref{lem:aPF} guarantees that $\lambda_{\max}(\mathcal A) < - \gamma_1$, which implies that $\gamma_2 > \gamma_1$.

Finally, let us show the irreducibility of matrix~$\mathcal A$ or, equivalently, the irreducibility of $\mathcal A^\top$. We regard the matrix $\mathcal A^\top$ as the adjacency matrix of a directed graph on $N+M$ nodes denoted by $\mathcal G'$. We label the nodes of $\mathcal G'$ as $p_1$, $\ldots$, $p_N$, $q_{i_1 j_1}$, $\ldots$, $q_{i_M j_M}$. The first term on the right-hand side of Eq.~\eqref{eq:dot p_i} implies that $\mathcal G'$ has an edge $(q_{i_\ell j_\ell}, p_{j_\ell})$ for all $\ell$, which corresponds to
$C_+ B'$ in Eq.~\eqref{eq:defMathcalA}.
The second term on the right-hand side of Eq.~\eqref{eq:dot q_ij} implies that $\mathcal G'$ has an edge $(p_{i_\ell}, q_{i_\ell j_\ell})$ for all $\ell$, which corresponds to $D_2^{\prime} C_-^{\top}$ in Eq.~\eqref{eq:defMathcalA}.

To show that $\mathcal G'$ is strongly connected, we first consider an arbitrary ordered pair of nodes $p_i$ and $p_j$ in $\mathcal G'$. Let us take a path $v_i = v_{\iota(0)}$, $v_{\iota(1)}$, \dots, $v_{\iota(s)}=v_j$ in the original graph $\mathcal G$. Then, from the above observation, we see that the graph $\mathcal G'$ contains the path $p_{i} = p_{\iota(0)}$, $q_{\iota(0)\iota(1)}$, $q_{\iota(1)}$, $q_{\iota(1)\iota(2)}$, \dots, $q_{\iota(s-1)\iota(s)}$, $p_{\iota(s)}=p_j$. 
Likewise, for an arbitrary ordered pair of nodes $p_i$ and $q_{i_\ell j_\ell}$ in $\mathcal G^{\prime}$, there is a path in $\mathcal G^{\prime}$ from $p_i$ to $p_{i_\ell}$. By appending edge $(p_{i_\ell}, q_{i_\ell j_\ell})$ to the end of this path, one obtains a path from $p_i$ to $q_{i_\ell j_\ell}$. A path from arbitrary $q_{i_\ell j_\ell}$ to $p_j$ and
one from $q_{i_\ell j_\ell}$ to  $q_{i_{\ell^{\prime}} j_{\ell^{\prime}} }$ can be similarly constructed. Therefore, a path exists between any pair of nodes in~$\mathcal G'$.
\end{proof}

\subsection{Epidemic threshold}

In this section, we assume that $\beta_{ij} = \beta$ and $\delta_i = \delta$, where $i, j \in \{1, \ldots, N\}$ and $\beta, \delta > 0$, and derive conditions under which the expected number of infected individuals decays exponentially fast. It holds true that having $\gamma_1 < 0$ in Eq.~\eqref{eq:rho1} is equivalent to the well-known epidemic threshold
$\beta/\delta > 1/  \lambda_{\max}(A)$ \cite{Wang2003SRDS,Chakrabarti2008,Pastorsatorras2015RevModPhys,KissMillerSimon2017book}.

Likewise, Theorem~\ref{thm:main} provides a tighter epidemic threshold given by
%
\begin{equation}\label{eq:extCondHetero:updated}
\left(\beta/\delta\right)_{\rm c} = \max\{\beta/\delta \mid \gamma_2 \ge 0\},
\end{equation}
where $\gamma_2$ is defined in Eq.~\eqref{eq:bound}.
Our following corollary provides an explicit expression of the epidemic threshold in terms of the adjacency matrix of the line graph and the non-backtracking matrix $H$.
\begin{corollary}
\label{cor:new}
Define the matrix $A_{L(\mathcal G)}$ by Eq.~\eqref{eq:def:ALG}. Then, 
\begin{equation}
\label{eq:epithre:updated}
\left(\frac{\beta}{\delta}\right)_{\rm c} = \frac{2}{\rho(A_{L(\mathcal G)}+H)-1}. 
\end{equation} 
\end{corollary}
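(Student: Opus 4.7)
The plan is to specialize the matrix $\mathcal A$ of Theorem~\ref{thm:main} to the homogeneous case and then characterize the threshold $\gamma_2=0$ as an eigenvalue condition that collapses to a spectral problem for $A_{L(\mathcal G)}+H$. Setting $B'=\beta I_M$, $D=\delta I_N$, $D_1'=D_2'=\delta I_M$, the matrix $\mathcal A$ becomes
\begin{equation*}
\mathcal A = \begin{bmatrix} -\delta I_N & \beta C_+ \\ \delta C_-^\top & \beta H^\top -(\beta+2\delta) I_M \end{bmatrix}.
\end{equation*}
Because $\mathcal A$ is an irreducible Metzler matrix (as already established in the proof of Theorem~\ref{propORtheorem:strictInequality}), Lemma~\ref{lem:aPF} gives a positive eigenvector and, moreover, $\lambda_{\max}(\mathcal A)$ varies continuously and monotonically with $\beta/\delta$. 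Hence the critical ratio defined by~\eqref{eq:extCondHetero:updated} is exactly the value at which $\lambda_{\max}(\mathcal A)=0$, i.e., at which $\mathcal A$ has a nontrivial kernel.

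Next I would write out this kernel condition block-wise. If $\bigl[\bm p^\top\ \bm q^\top\bigr]^\top$ is a nonzero vector in $\ker\mathcal A$, the top block yields $\bm p=(\beta/\delta)\,C_+\bm q$, and substituting into the bottom block gives
\begin{equation*}
\beta\bigl(C_-^\top C_+ + H^\top\bigr)\bm q = (\beta+2\delta)\,\bm q.
\end{equation*}
Using the identity $C_-^\top C_+ = A_{L(\mathcal G)}^\top$ from Eq.~\eqref{eq:C-Tc+}, this simplifies to $(A_{L(\mathcal G)}+H)^\top\bm q = (1+2\delta/\beta)\,\bm q$. Thus the threshold condition is precisely that $1+2\delta/\beta$ is an eigenvalue of $(A_{L(\mathcal G)}+H)^\top$.

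It then remains to argue that the relevant eigenvalue is the \emph{spectral radius} $\rho(A_{L(\mathcal G)}+H)$, and this is the main subtle step. The idea is to note that the positive Perron eigenvector $\bm\xi=\bigl[\bm v^\top\ \bm w^\top\bigr]^\top$ of $\mathcal A$ guaranteed by Lemma~\ref{lem:aPF}.\ref{item:aPF:eq} has a strictly positive $\bm q$-component (one can take $\bm w=C_-^\top\bm v$, which is positive in the strongly connected case used throughout), so the eigenvector of $(A_{L(\mathcal G)}+H)^\top$ obtained from the kernel calculation above is positive. Since $A_{L(\mathcal G)}+H\ge 0$, the Perron--Frobenius theorem then forces the associated eigenvalue to equal $\rho(A_{L(\mathcal G)}+H)$. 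Equating $1+2\delta/\beta=\rho(A_{L(\mathcal G)}+H)$ and solving for $\beta/\delta$ yields Eq.~\eqref{eq:epithre:updated}.

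The main obstacle is this last identification of the eigenvalue as the spectral radius; it requires irreducibility of $A_{L(\mathcal G)}+H$ (or at least that the kernel vector be a Perron vector), which should follow from the strong connectivity of $\mathcal G'$ already proved for Theorem~\ref{propORtheorem:strictInequality} together with the fact that $A_{L(\mathcal G)}+H\ge H$ and the line-graph contribution fills in exactly the edges excluded by the non-backtracking condition. The remaining routine step is to verify monotonicity of $\lambda_{\max}(\mathcal A)$ in $\beta/\delta$ so that the threshold defined in Eq.~\eqref{eq:extCondHetero:updated} is uniquely characterized by $\lambda_{\max}(\mathcal A)=0$.
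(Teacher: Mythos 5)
Your route is genuinely different from the paper's. The paper splits $\mathcal A = R+P$ with $R$ a Hurwitz Metzler matrix and $P\geq 0$, invokes a result of Damm and Hinrichsen to get the equivalence $\lambda_{\max}(\mathcal A)<0 \Leftrightarrow \rho(R^{-1}P)<1$, and then computes $R^{-1}P$ explicitly, which immediately yields $\rho(R^{-1}P)=\frac{\beta}{\beta+2\delta}\rho(A_{L(\mathcal G)}+H)$ and hence the threshold. That argument needs no irreducibility and no discussion of how $\lambda_{\max}(\mathcal A)$ varies with $\beta$: the if-and-only-if characterization does all the work. Your kernel-plus-Perron--Frobenius argument is more hands-on and reaches the same block computation (your substitution $\beta(C_-^\top C_+ + H^\top)\bm q=(\beta+2\delta)\bm q$ is exactly the content of Eq.~\eqref{eq:rhoR-1P}), but it pays for this with two extra obligations that the paper's route avoids.

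Those obligations are where the gaps are. First, the monotonicity of $\lambda_{\max}(\mathcal A)$ in $\beta$ is not ``routine'': increasing $\beta$ increases the off-diagonal blocks but \emph{decreases} the diagonal entries $-(\beta+2\delta)$, so the standard entrywise monotonicity of the Perron root of a Metzler matrix does not apply. It can be rescued (e.g., the eigenvalue perturbation formula with left/right Perron vectors $\bm u,\bm v>0$ gives $\frac{d\lambda_{\max}}{d\beta}=\frac{(\lambda_{\max}+2\delta)\,\bm u_2^\top\bm v_2}{\beta\,\bm u^\top\bm v}>0$ since $\lambda_{\max}(\mathcal A)\geq -\delta$), but some such argument must actually be supplied; without it, the set $\{\beta:\gamma_2\geq 0\}$ need not a priori be an interval and the maximum in Eq.~\eqref{eq:extCondHetero:updated} need not occur at the smallest singular point of $\mathcal A$. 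Second, your whole argument leans on the irreducibility of $\mathcal A$, hence on strong connectivity of $\mathcal G$; the paper explicitly remarks that Corollary~\ref{cor:new} does \emph{not} require strong connectedness, so your proof establishes a strictly weaker statement. A smaller point: the Perron eigenvector of $\mathcal A$ is not of the form $\bigl[\bm v^\top\ (C_-^\top\bm v)^\top\bigr]^\top$ --- that vector is the test vector from the proof of Theorem~\ref{propORtheorem:strictInequality}, not an eigenvector --- but this is harmless, since Lemma~\ref{lem:aPF}.\ref{item:aPF:eq} already gives positivity of all components of the Perron vector of the irreducible $\mathcal A$, and a positive eigenvector of the nonnegative matrix $(A_{L(\mathcal G)}+H)^\top$ must indeed belong to the spectral radius by the Collatz--Wielandt bounds, without any irreducibility of $A_{L(\mathcal G)}+H$.
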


\begin{proof}
We decompose $\mathcal A$ such that
\begin{equation}
\mathcal A = R + P, 
\end{equation}
where
\begin{equation}
R = \begin{bmatrix}
-\delta I & O
\\
\delta C_-^\top & -\beta I - 2\delta I
\end{bmatrix}
\end{equation}
and
\begin{equation}
P  = \begin{bmatrix}
O & \beta C_+
\\
O & \beta H^\top
\end{bmatrix}.
\end{equation}
Matrix~$R$ is a Metzler matrix, all the eigenvalues of $R$ have negative real parts, and matrix~$P$ is nonnegative. Therefore, Theorem~2.11 in Ref.~\cite{Damm2003LinAlgItsAppl} implies that $\lambda_{\max}(\mathcal A) < 0$ if and only if $\rho(R^{-1} P ) < 1$. Because
\begin{align}
R^{-1}P = & 
\begin{bmatrix}
-\dfrac{1}{\delta}I & O\\
-\dfrac{1}{\beta+2\delta }C_-^{\top} & - \dfrac{1}{\beta + 2\delta}I
\end{bmatrix}
P\notag\\
= &
\begin{bmatrix}
O & -\dfrac \beta \delta C+ \\ O & -\dfrac \beta{\beta + 2\delta}(C_-^\top C_+ + H^\top)
\end{bmatrix},
\end{align}
one obtains
\begin{align}\label{eq:rhoR-1P}
\rho(R^{-1} P) &= \frac{\beta}{\beta+2\delta}\rho(C_-^\top C_+ + H^\top) \notag\\
&= \frac{\beta}{\beta+2\delta }\rho(A_{L(\mathcal G)}^\top + H^\top)
\notag\\
&=
\frac{\beta}{\beta+2\delta }\rho(A_{L(\mathcal G)} + H),
\end{align}
where we used Eq.~\eqref{eq:C-Tc+}.
Therefore, $\rho(R^{-1} P) < 1$ if and only if
\begin{equation}
\frac{\beta}{\delta } < \frac{2}{\rho(A_{L(\mathcal G)} + H) - 1},  
\end{equation}
which is equivalent to Eq.~\eqref{eq:epithre:updated}.
\end{proof}

Remark: Corollary~\ref{cor:new} does not require strong connectedness (i.e., irreducibility of the adjacency matrix) of the network.

\section{Numerical results}

In this section, we carry out numerical simulations of the stochastic SIS dynamics for several networks to assess the tightness of the different lower bounds on the decay rate. In the following numerical simulations, we set $\beta_{ij} = \beta$ and $\delta_i = \delta$, where $i, j \in \{1, \ldots, N\}$, for simplicity.
We further assume that $\delta = 1$ without loss of generality (because changing $\beta$ and $\delta$ simultaneously by the same factor is equivalent to rescaling the time variable without changing $\beta$ or $\delta$).

We ran the stochastic SIS dynamics $10^4$ times starting from the initial condition in which all nodes are infected. For each run of the simulations, we measured the number of infected individuals at every integer time (including time 0) until the infection dies out or the maximum time, which is set to $5\times 10^4$, is reached. Then, at each integer time, we summed the number of infected nodes over all the runs and divided it by $N$ and by the number of runs ($= 10^4$), thus obtaining the average fraction of infected nodes, i.e., $\rho(t)\equiv \sum_{i=1}^N p_i(t)/N$, where $t=0, 1, \ldots$.

We calculated the decay rate from the observed $\{ \rho(t) : t=0, 1, \ldots\}$ as follows: because the fluctuations in $\rho(t)$ are expected to be large when $\rho(t)$ is small, we identified the smallest integer time at which $\rho(t)$ is less than $10^{-4}$ for the first time, and discarded $\rho(t)$ at this and all larger $t$ values. Then, because $\rho(t)$ is expected to decay exponentially in $t$, we calculated a linear regression between $\log \rho(t)$ and $t$ at the remaining integer values of $t$. The sign-flipped slope of this regression provides a numerical estimate of the decay rate. We confirmed that the Pearson correlation coefficient in the linear regression was at least $0.958$ for all networks and all $\beta$ values. The Pearson correlation was typically larger than $0.99$.

We used eight undirected and unweighted networks to compare the numerically obtained decay rate and the rigorous lower bounds. The lower bounds to be compared are $\gamma_1$, $\gamma_2$, and the one obtained in our previous study \cite{Ogura2018SystControlLett}, which is denoted by $\gamma_2^{\prime}$ (see \ref{sec:Ogura2018 bound} for a summary). 

Four of the eight networks used were created by generative models with $N=100$ nodes.
First, we generated a regular random graph in which all nodes had degree six, resulting in $300$ undirected edges (therefore, $M=600$ directed edges). Second, we used the Barab\'{a}si-Albert (BA) model to generate a power-law degree distribution with an exponent of 3 when $N$ is large \cite{Barabasi1999Science}.  We set the parameters $m_0=3$ and $m=3$, where $m_0$ is the initial number of nodes forming a clique in the process of growing a network, and $m$ is the number of edges that each new node initially brings into the network. With these parameter values,  the mean degree is approximately equal to $2m = 6$. The generated network had $294$ undirected edges. Third, we used a cycle graph, where each node had degree two (by definition), and there were $100$ undirected edges.
These three models lack community structure that many empirical contact networks have.
Therefore, as a fourth network, we used the Lancichinetti--Fortunato--Radicchi (LFR) model that can generate networks with community structure \cite{Lancichinetti2009PhysRevE-benchmark}.
The LFR model creates networks having a heterogeneous degree distribution and a heterogeneous distribution of community size. 
A small value of parameter $\mu$ corresponds to a strong community structure. We set $\mu=0.1$. We set the mean degree to six, the largest degree to $N/4 = 25$, the power-law exponent for the degree distribution to two and the power-law exponent for the distribution of community size to one. The network had $319$ undirected edges.

We also used four real-world networks, for which we ignored the direction and weight of the edge.
First, we used the dolphin social network, which has
$N=62$ nodes and $159$ undirected edges \cite{Lusseau2003BehavEcolSociobiol}.
A node represents a bottleneck dolphin individual. An edge indicates frequent association between two dolphins. This network is a connected network. 
Second, we used the largest connected component (LCC) of a coauthorship network of researchers in network science, which has
$N=379$ nodes and $914$ undirected edges \cite{Newman2006PhysRevE-collabo}.
A node represents a researcher publishing in fields related to network science. An edge indicates that two researchers have coauthored a paper at least once. Third, we used the LCC of an email network, which has $N=1,133$ nodes and $5,451$ undirected edges \cite{GuimeraDanon2003PhysRevE}. A node represents a member of the University Rovira i Virgili, Tarragona, Spain. An edge is an email exchange relationship between a pair of members.
Fourth, we used the LCC of the hamsterter network, which has
$N=1,788$ nodes and $12,476$ undirected edges
\cite{Kunegis2013%
%
%
}.
A node represents a user of the website hamsterster.com. An edge is a friendship relationship between two users.

For a range of values of $\beta$, we compare decay rates obtained numerically with the three lower bounds described in this paper for the eight networks mentioned above. The results are shown in Fig.~\ref{fig:decay rate}.
It should be noted that the decay rate and its bounds are equal to one for $\beta=0$ because we set $\delta=1$ for normalization. The bound $\gamma_2$ proposed in the present study is considerably tighter than the first-order bound, $\gamma_1$, for some networks, in particular, the cycle
(Fig.~\ref{fig:decay rate}(c)). The improvement tends to be more manifested for smaller networks. We also find that $\gamma_2$ is tighter than $\gamma_2^{\prime}$ for all the networks and infection rates, despite that $\gamma_2$ is easier to calculate than $\gamma_2^{\prime}$. For example, for the regular random graph
(Fig.~\ref{fig:decay rate}(a)) and the cycle (Fig.~\ref{fig:decay rate}(c)), $\gamma_2$ is close to the numerically estimated decay rate for small to moderate values of $\beta$, which is not the case for $\gamma_2^{\prime}$ as well as for $\gamma_1$.

\section{Conclusions}

We have introduced a lower bound on the decay rate of
the SIS model on arbitrary directed and weighted networks. The new bound is based on a new second-order moment-closure technique aiming to improve both the computational cost and the accuracy of existing second-order bound. It is equal to the leading eigenvalue of an $(N+M)\times (N+M)$ Metzler matrix depending on the non-backtracking and incidence matrices of the network (Eq.~\eqref{eq:defMathcalA}). Therefore, for sparse networks, the dimension of this matrix grows quasi-linearly. Furthermore, we have shown that the new bound, $\gamma_2$, is tighter than the first-order lower bound, $\gamma_1$, which is equal to the leading eigenvalue of an $N\times N$ matrix depending directly on the adjacency matrix. 

Non-backtracking matrices of networks have been employed for analyzing properties of
stochastic epidemic processes on networks, such as the epidemic threshold of the
SIS model \cite{Shrestha2015PhysRevE,WangTangStanleyBraunstein2017RepProgPhys} and the susceptible-infected-recovered (SIR) model \cite{Karrer2010PhysRevE,Lokhov2015PhysRevE,Morone2015Nature,RadicchiCastellano2016PhysRevE}. The non-backtracking matrix more accurately describes unidirectional state-transition dynamics, such as the SIR dynamics, than the adjacency matrix does because unidirectional dynamics implies that contagions do not backtrack, i.e. if node $v_i$ has infected its neighbor $v_j$, $v_j$ does not re-infect $v_i$.
For the same reason, the non-backtracking matrix also predicts the percolation threshold for networks better than the adjacency matrix \cite{Hamilton2014PhysRevLett,Karrer2014PhysRevLett}. However, the same logic does not apply to the SIS model, in which re-infection through the same edge can happen indefinitely many times. This is a basis of a recent criticism to the application of the non-backtracking matrix to the SIS model \cite{Castellano2018PhysRevE-nonbacktracking}.
For some networks, the epidemic threshold of the SIS model that does not take into account backtracking infection paths 
\cite{Shrestha2015PhysRevE,WangTangStanleyBraunstein2017RepProgPhys} is not accurate \cite{Castellano2018PhysRevE-nonbacktracking}. Although $\gamma_2$ and the corresponding epidemic threshold that we have derived use the non-backtracking matrix, they are mathematical bounds and do not suffer from the inaccuracy caused by the neglect of backtracking infection paths.
%
The present study has shown a new and solid usage of the non-backtracking matrix in understanding the SIS model on networks.

By following the derivation of the epidemic threshold via $\gamma_1$, we derived the epidemic threshold based on $\gamma_2$. The new epidemic threshold is always larger than that based on $\gamma_1$, which is the reciprocal of the largest eigenvalue of the adjacency matrix. Because $\gamma_2$ improves upon $\gamma_1$, we expect that the new epidemic threshold is a better estimate than that based on $\gamma_1$. This point warrants future work. Likewise, the eigenvalue statistics for the adjacency matrix of scale-free networks yield intricate relationships between the epidemic threshold based on $\gamma_1$ and statistics of the node's degree in scale-free networks \cite{Castellano2010PhysRevLett}. How such a result translates to the case of the epidemic threshold based on $\gamma_2$ also warrants future work.

\appendix

\section{\label{sec:Ogura2018 bound}Lower bound on the decay rate derived in Ref.~\cite{Ogura2018SystControlLett}}

In the proof of Theorem~\ref{thm:main}, we have used the following inequality for bounding $q_{ij} = E[x_i(1-x_j)]$ (see Eq.~\eqref{eq:20}): 
\begin{equation}
E[(1-x_i) (1-x_j)x_k] \leq E[(1-x_i) x_k], 
\end{equation}
in which the inequality~$x_j \geq 0$ is used; we have presumed that node $j$ is susceptible. 
In contrast, in our previous study \cite{Ogura2018SystControlLett}, we used
\begin{equation}
E[(1-x_i) (1-x_j)x_k] \leq E[(1-x_j) x_k],
\label{eq:triplet closure Ogura2018}
\end{equation}
which was based on ~$x_i\geq 0$. The use
of Eq.~\eqref{eq:triplet closure Ogura2018} led to the following lower bound on the decay rate~\cite{Ogura2018SystControlLett}:

\begin{theorem} 
Assume that there exist positive numbers $\beta_1$, \dots, $\beta_N$ such that $\beta_{ij} = \beta_j$  for all $i, j \in \{1, \ldots, N\}$. Let $A$ be the adjacency matrix of $\mathcal G$ and its $(i, j)$th entry be $a_{ij}$. Define the $N^2\times N^2$ Metzler matrix 
\begin{equation}
\mathcal B = \begin{bmatrix}
-D & \bigoplus_{i=1}^N (\beta_i A_{i, \backslash\{i\}})
\\
\col_{1\leq i\leq N}(\delta_i V_i) & 
\ \ \bigoplus_{i=1}^N\left(
- \Gamma_i + \col_{j\neq i} \beta_j A_{j, \backslash\{i\}}
\right)
\end{bmatrix}, 
\end{equation}
where $\bigoplus_{i=1}^N M_i$ is the block-diagonal matrix containing matrices $M_1$, $\ldots$, $M_N$ as the diagonal blocks,
$\backslash \{ i\}$ denotes all the columns except the $i$th column,
$V_i\in \mathbb{R}^{(N-1)\times N}$ is the matrix obtained by removing the $i$th row from the $N\times N$ identity matrix, $\Gamma_i  =
\text{diag}(\overline{\gamma}_{i,1}, \ldots, \overline{\gamma}_{i,i-1}, \overline{\gamma}_{i,i+1}, \ldots, \overline{\gamma}_{i,N})$, and
$\overline{\gamma}_{i,j} = \delta_i + \delta_j +
a_{ij}\beta_i$. Then, the decay rate~$\gamma$ satisfies 
\begin{equation}
\gamma \geq \gamma_2^{\prime} \equiv -\lambda_{\max}(\mathcal B). 
\end{equation}
\end{theorem}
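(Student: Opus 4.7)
My plan is to mirror the proof of Theorem~\ref{thm:main} step by step, the only substantive change being the use of the alternative moment-closure inequality~\eqref{eq:triplet closure Ogura2018} instead of the non-backtracking closure employed there. The state variables are $p_i(t)=E[x_i(t)]$ together with pair-probabilities $q_{ij}(t)$ ranging over every ordered pair of distinct indices rather than only over directed edges, which accounts for the enlarged $N^2$-dimensional state space.

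First, I would re-derive the ODE for $\dot p_i$ exactly as in Eq.~\eqref{eq:dot p_i}. Under the hypothesis $\beta_{ij}=\beta_j$, stacking the result across $i$ with the block convention that arranges the $q$-variables into $N$ blocks of size $N-1$, one block per value of the outer index~$i$, reproduces the top block row $[-D,\;\bigoplus_{i=1}^N(\beta_i A_{i,\backslash\{i\}})]$ of~$\mathcal B$. Next, I would derive $\dot q_{ij}$ by the same four-term case analysis as in Eq.~\eqref{eq:20}: recovery of $i$, recovery of $j$, and the two infection events in which $i$ or $j$ is infected by an external neighbor $k$. The diagonal coefficient $-\overline{\gamma}_{i,j}$ assembles the two recovery rates together with the $x_i^2=x_i$ (or $x_j^2=x_j$) absorption of the single $k=i$ (or $k=j$) term of the relevant infection sum; the identity $E[x_ix_j]=p_i-q_{ij}$ injects the $\delta_i p_j$ coupling that populates the bottom-left block $\col_{1\leq i\leq N}(\delta_i V_i)$; and each remaining triplet expectation $E[(1-x_i)(1-x_j)x_k]$ is bounded via~\eqref{eq:triplet closure Ogura2018} by $E[(1-x_j)x_k]$, which is then identified with a $q$-variable so that the coefficients $\beta_j a_{jk}$ populate precisely the block-diagonal entry $\col_{j\neq i}\beta_j A_{j,\backslash\{i\}}$ inside the $i$th diagonal block. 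The remaining negative triplets of the form $E[x_i(1-x_j)x_k]$ are discarded, which loosens the equality into a differential inequality and introduces a nonnegative slack $\bm\epsilon(t)$ on the right-hand side.

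The third step is to combine both block equations into
\begin{equation*}
\frac{d}{dt}\begin{bmatrix}\bm p\\ \bm q\end{bmatrix}
=\mathcal B\begin{bmatrix}\bm p\\ \bm q\end{bmatrix}
-\begin{bmatrix}\bm 0\\ \bm\epsilon(t)\end{bmatrix},
\end{equation*}
and apply the Metzler comparison argument from the end of the proof of Theorem~\ref{thm:main} verbatim: since $\mathcal B$ is Metzler we have $e^{\mathcal B t}\geq 0$, and combining with $\bm\epsilon(t)\geq 0$ yields the component-wise bound $\begin{bmatrix}\bm p(t)\\ \bm q(t)\end{bmatrix}\leq e^{\mathcal B t}\begin{bmatrix}\bm p(0)\\ \bm q(0)\end{bmatrix}$. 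Summing the $p$-components then produces $\sum_{i=1}^N p_i(t)=O(e^{\lambda_{\max}(\mathcal B)t})$, which gives $\gamma\geq -\lambda_{\max}(\mathcal B)=\gamma_2^{\prime}$.

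The main obstacle is the bookkeeping needed to verify that the closed-loop coefficient matrix has exactly the block-diagonal form stated. Specifically, one must line up the index conventions so that the closure~\eqref{eq:triplet closure Ogura2018} couples each $q_{ij}$ only to $q$-variables sitting in the same diagonal block, with no cross-block leakage beyond what the nonnegative slack $\bm\epsilon$ absorbs. This is the structural difference from Theorem~\ref{thm:main}, whose non-backtracking closure retains the other factor $1-x_i$ and therefore couples edge-indexed variables according to $H^{\top}$, producing the sparse $(N+M)$-dimensional matrix $\mathcal A$ rather than the $N^2$-dimensional block-diagonal $\mathcal B$ appearing here.
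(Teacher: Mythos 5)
Your proposal is correct and follows exactly the route the paper itself indicates: the appendix states this theorem without proof, importing it from Ref.~\cite{Ogura2018SystControlLett} and noting only that it results from rerunning the proof of Theorem~\ref{thm:main} with the non-backtracking closure replaced by Eq.~\eqref{eq:triplet closure Ogura2018}, which is precisely what you do. The bookkeeping you defer resolves as you anticipate: because the closure $E[(1-x_i)(1-x_j)x_k]\le E[(1-x_j)x_k]$ retains the index of the node that \emph{remains susceptible}, the $i$th diagonal block must collect the $N-1$ variables $E[x_j(1-x_i)]$, $j\neq i$ (the pairs in which $v_i$ is the susceptible member), after which every coupling generated by the closure stays inside a single block, and the only cross-block terms are the exactly computed couplings $\delta_i V_i \bm p$ and $\beta_i A_{i,\backslash\{i\}}$ to and from $\bm p$.
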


\section*{Acknowlegdments}

We thank Claudio Castellano for valuable comments on the manuscript.

\section*{Funding}

National Science Foundation (CAREER-ECCS-1651433 to V.M.P.) and
Japan Society for the Promotion of Science (JP18K13777 to M.O.).

\newpage
\clearpage

\begin{figure}
\includegraphics[width=8cm]{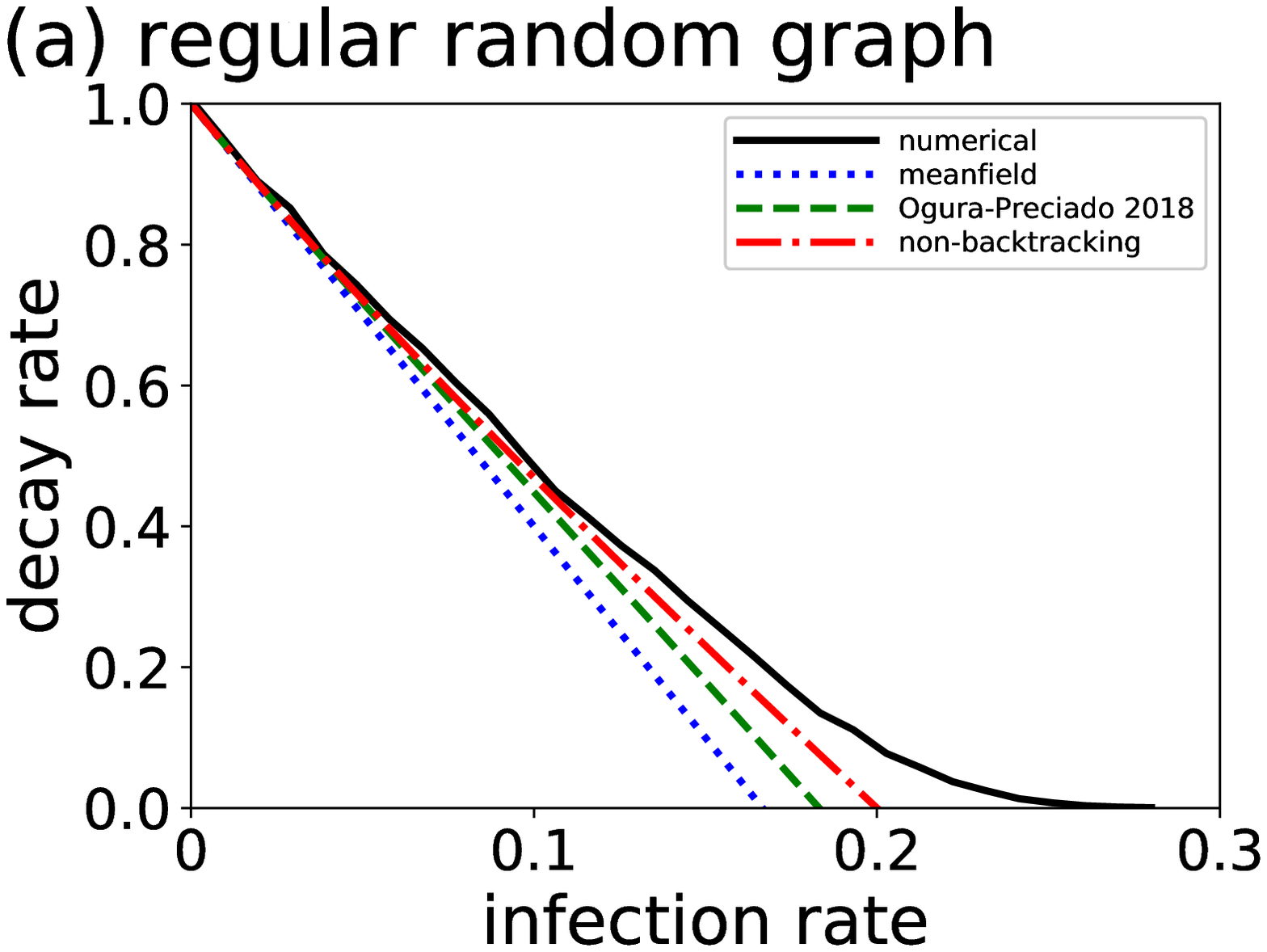}
\includegraphics[width=8cm]{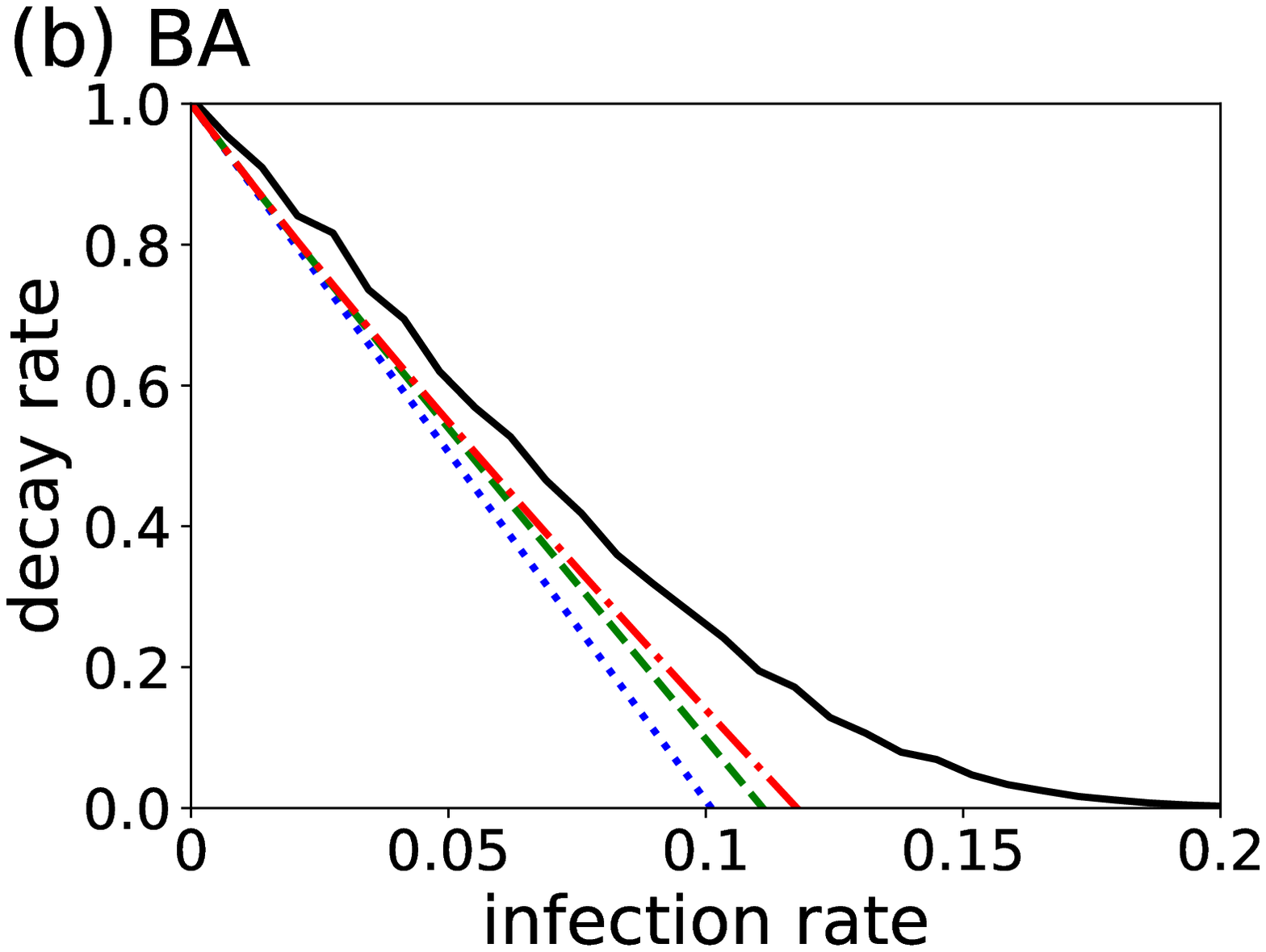}
\includegraphics[width=8cm]{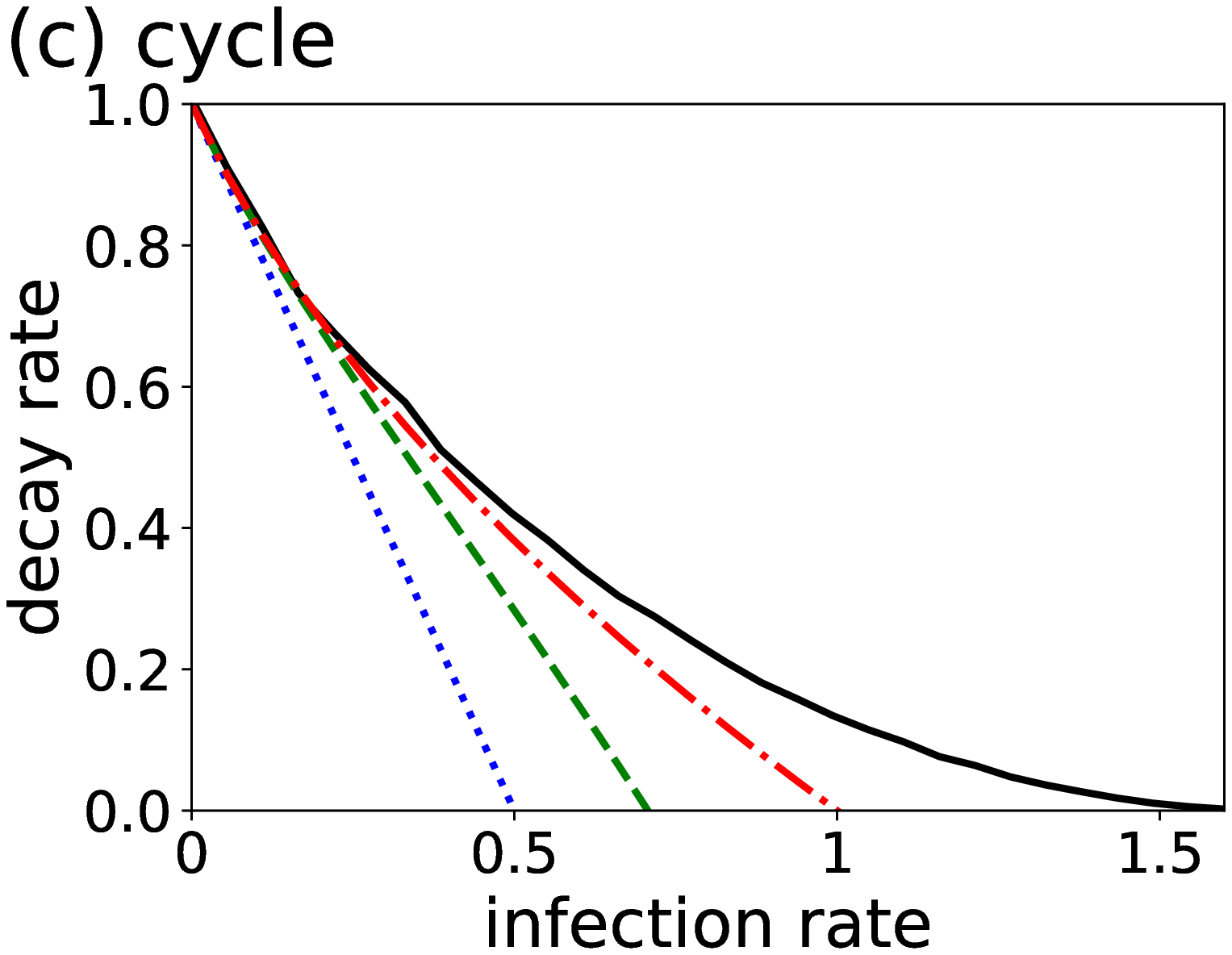}
\includegraphics[width=8cm]{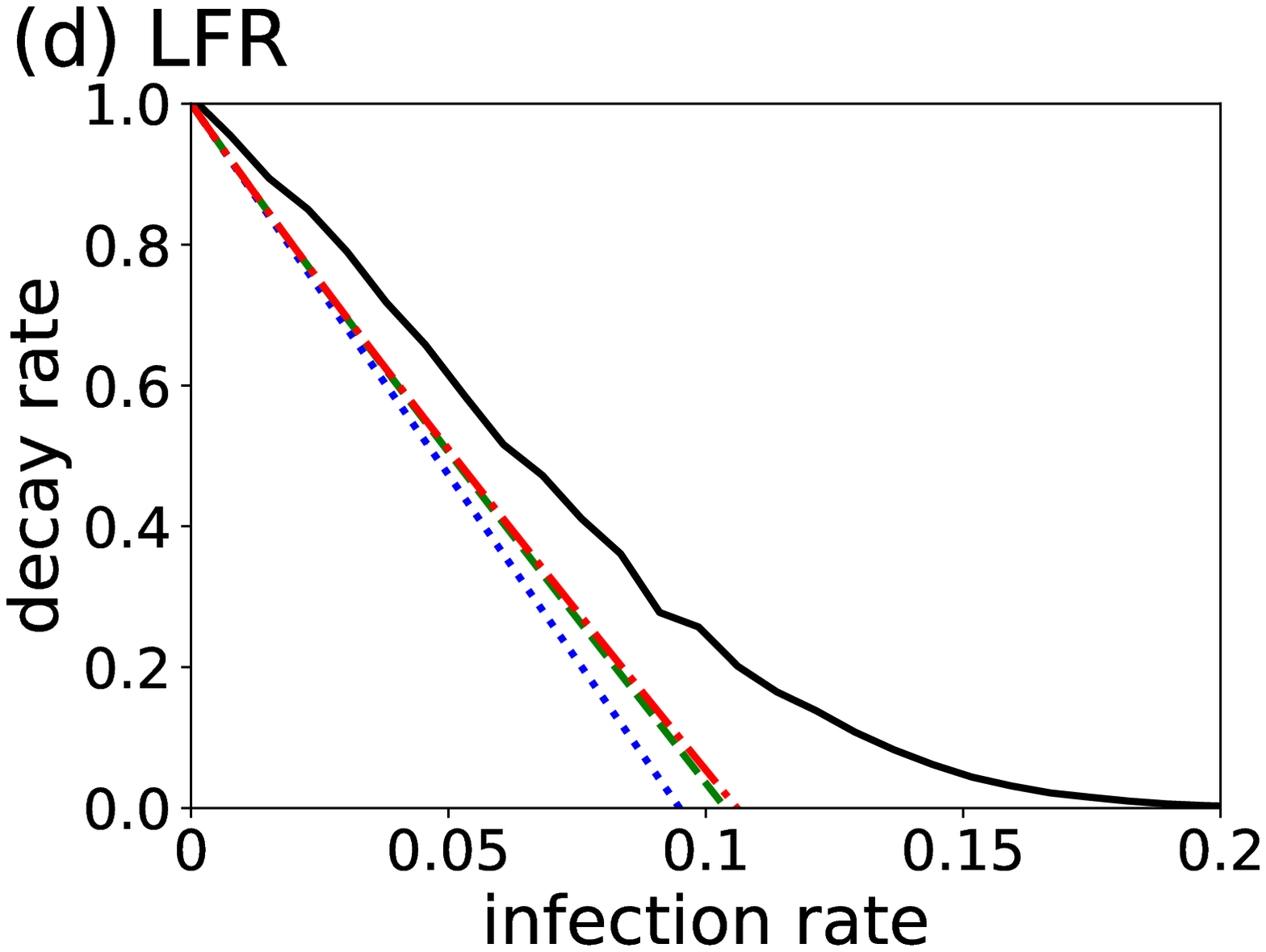}
\caption{Decay rates for different networks and infection rates. (a) Regular random graph. (b) BA model. (c) Cycle. (d) LFR model. (e) Dolphin. (f) Network science. (g) Email. (h) Hamsterster.}
\label{fig:decay rate}
\end{figure}

\clearpage

\noindent
\includegraphics[width=8cm]{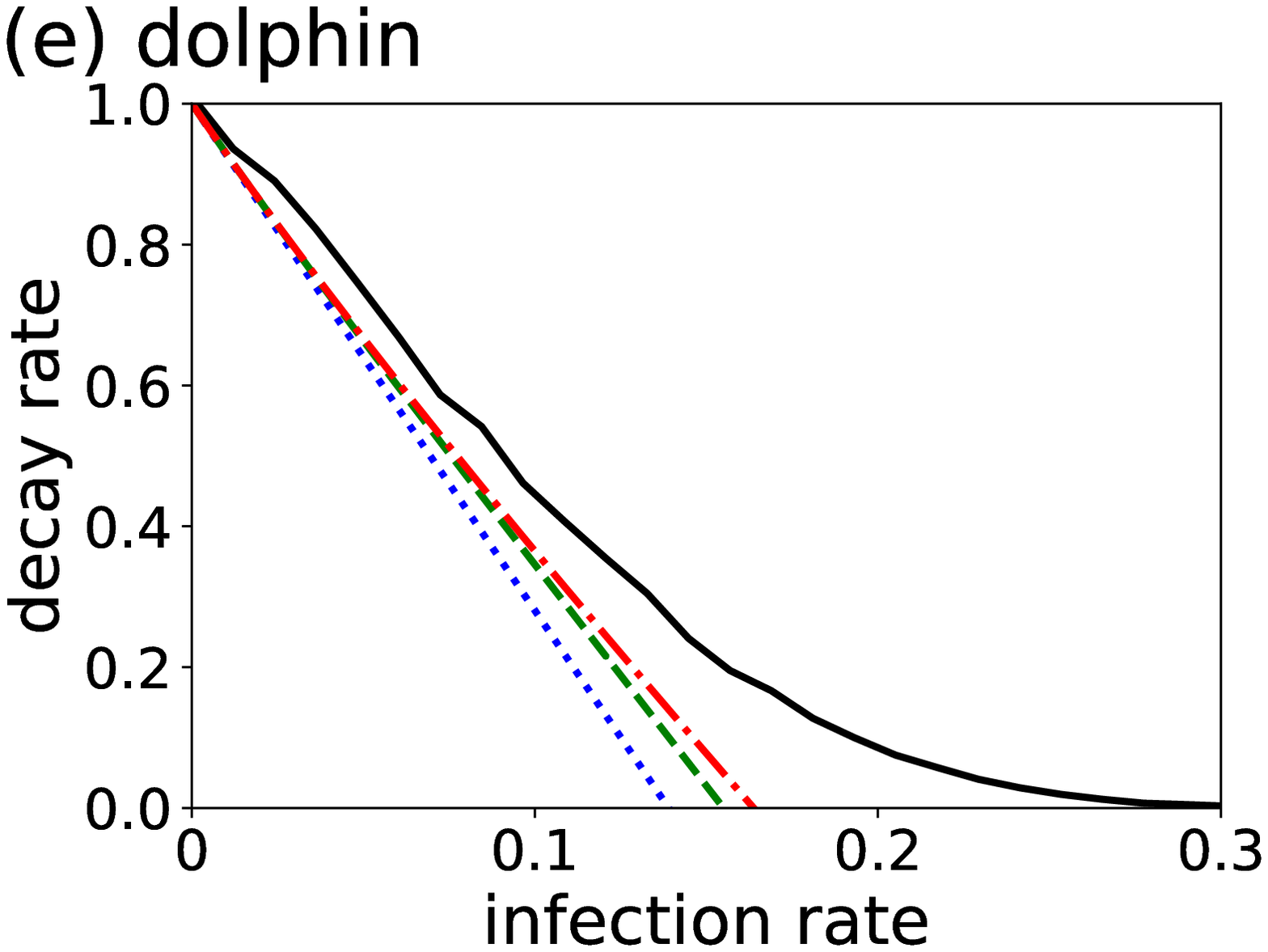}
\includegraphics[width=8cm]{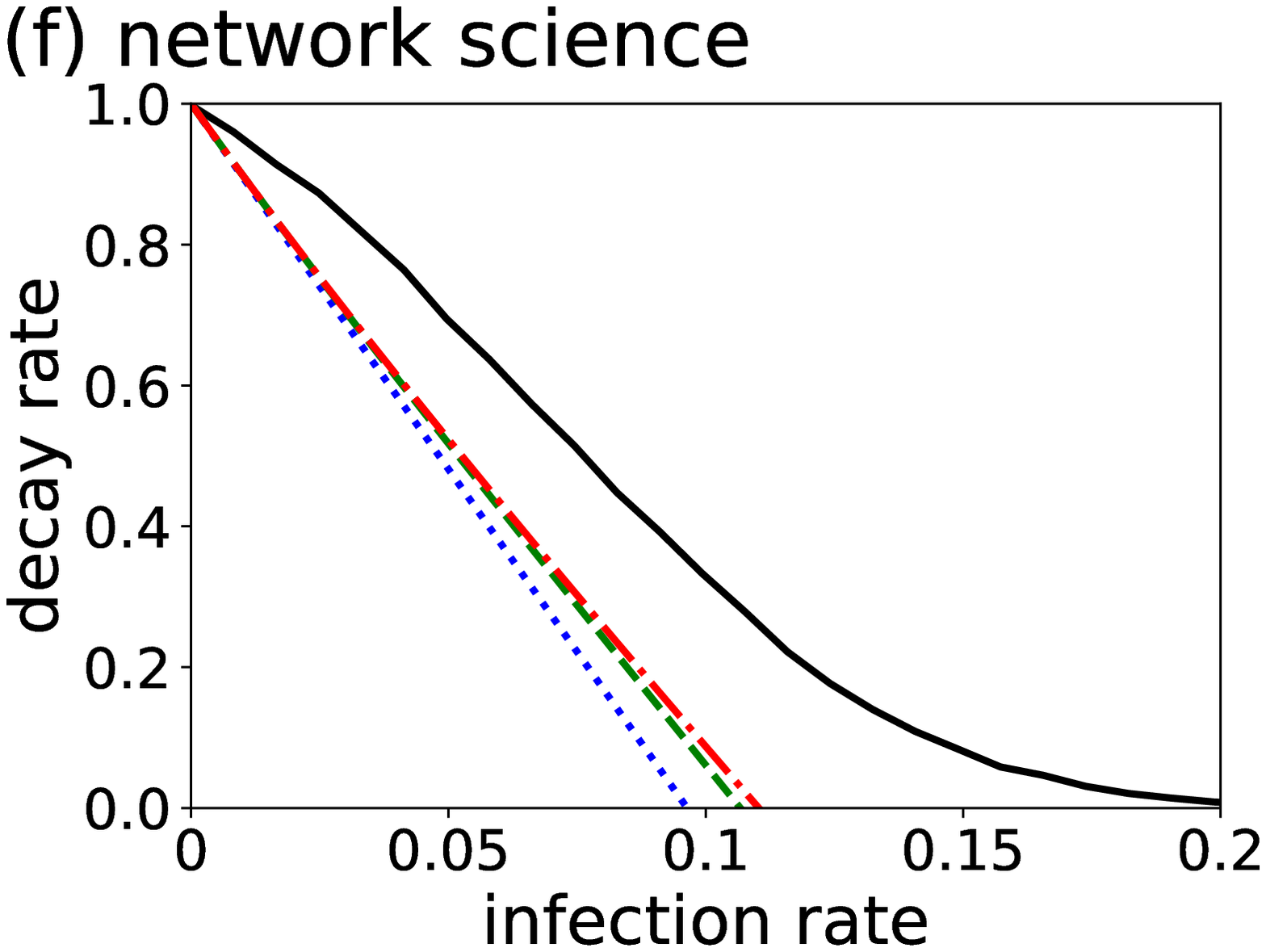}
\includegraphics[width=8cm]{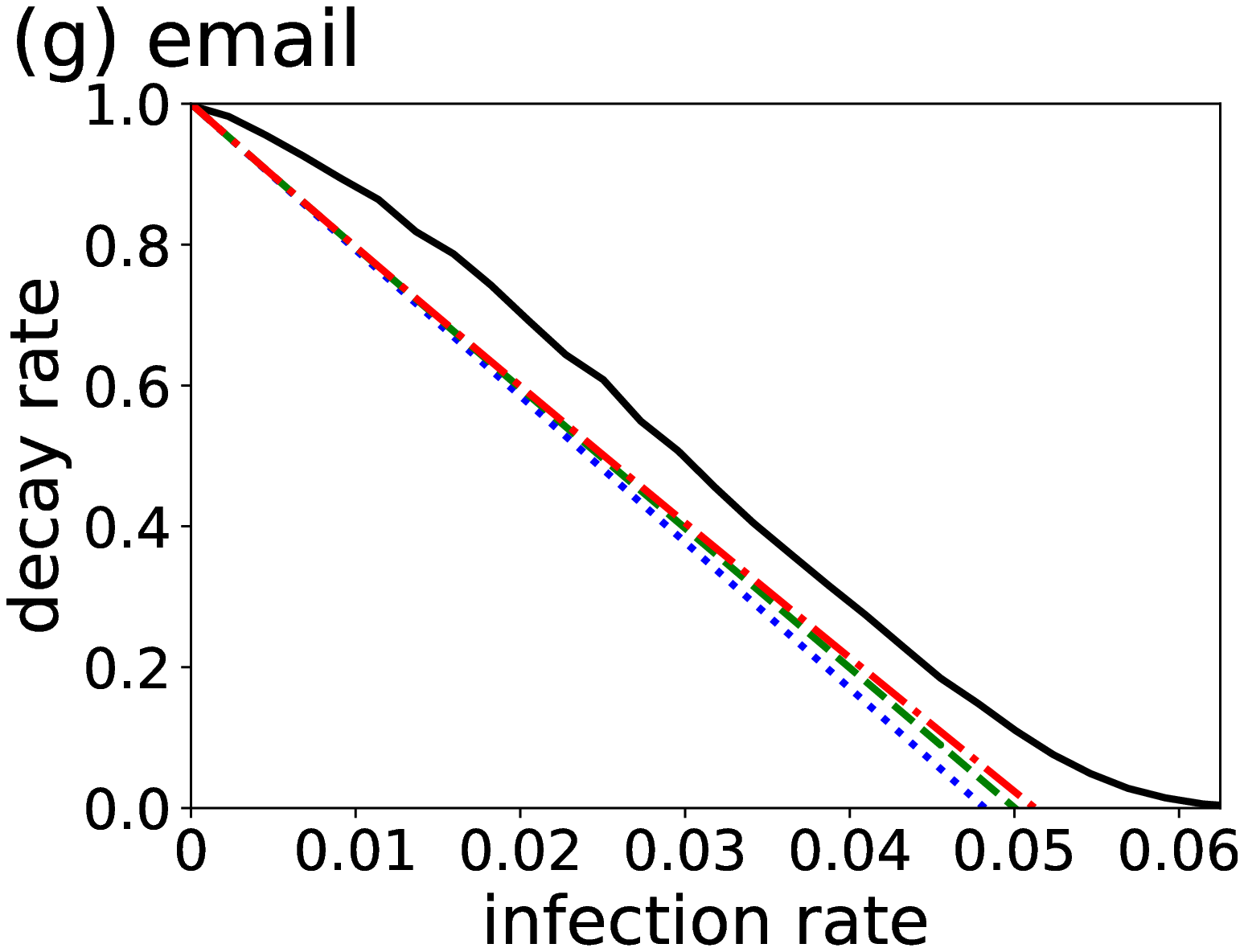}
\includegraphics[width=8cm]{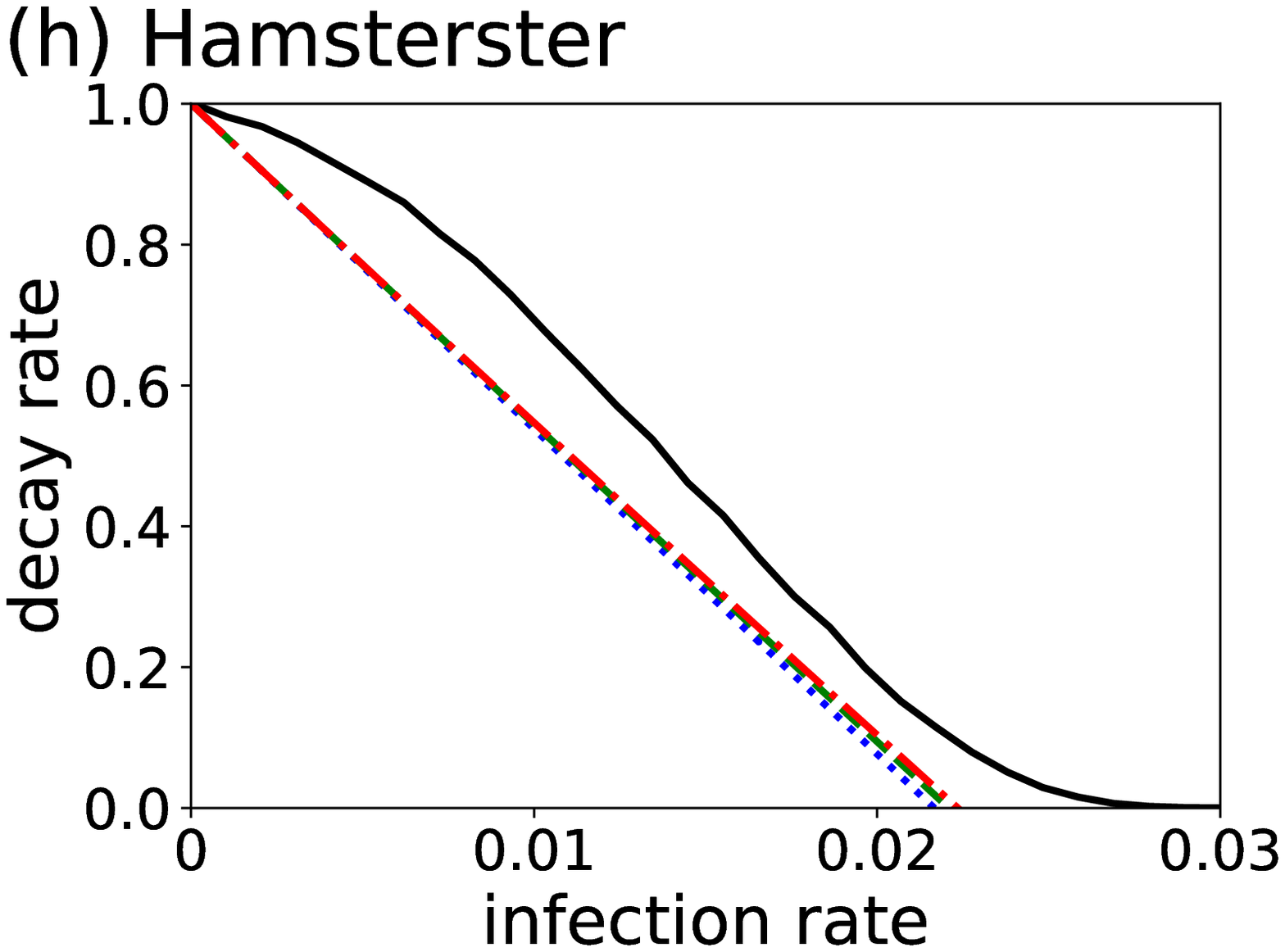}


\end{document}